\begin{document}
%
\title{Open Data in the Digital Economy: An Evolutionary Game Theory Perspective}
%
%
%

\author{Qin Li,
        Bin Pi,~\IEEEmembership{Student Member,~IEEE,}
        Minyu Feng,~\IEEEmembership{Member,~IEEE,} and J\"{u}rgen Kurths 
\thanks{This work was supported in part by the National Nature Science Foundation of China (NSFC) under Grant No. 62206230, in part by the Humanities and Social Science Fund of Ministry of Education of the People's Republic of China under Grant No. 21YJCZH028, and in part by the Natural Science Foundation of Chongqing under Grant No. CSTB2023NSCQ-MSX0064.}
\thanks{Qin Li is with the School of Public Policy and Administration, Chongqing University, Chongqing 400044, P. R. China.}
\thanks{Bin Pi is with the School of Mathematical Sciences, University of Electronic Science and Technology of China, Chengdu 611731, P. R. China.}
\thanks{Minyu Feng is with the College of Artificial Intelligence, Southwest University, Chongqing 400715, P. R. China. (e-mail: myfeng@swu.edu.cn.)}
\thanks{J\"{u}rgen Kurths is with the Potsdam Institute for Climate Impact Research, 14437 Potsdam, Germany, and also with the Institute of Physics, Humboldt University of Berlin, 12489 Berlin, Germany.}}

%
%

\markboth{IEEE TRANSACTIONS ON COMPUTATIONAL SOCIAL SYSTEMS}%
{Shell \MakeLowercase{\textit{et al.}}: Bare Demo of IEEEtran.cls for IEEE Journals}
%



\maketitle

\begin{abstract}
Open data, as an essential element in the sustainable development of the digital economy, is highly valued by many relevant sectors in the implementation process. However, most studies suppose that there are only data providers and users in the open data process and ignore the existence of data regulators. In order to establish long-term green supply relationships between multi-stakeholders, we hereby introduce data regulators and propose an evolutionary game model to observe the cooperation tendency of multi-stakeholders (data providers, users, and regulators). The newly proposed game model enables us to intensively study the trading behavior which can be realized as strategies and payoff functions of the data providers, users, and regulators. Besides, a replicator dynamic system is built to study evolutionary stable strategies of multi-stakeholders. In simulations, we investigate the evolution of the cooperation ratio as time progresses under different parameters, which is proved to be in agreement with our theoretical analysis. Furthermore, we explore the influence of the cost of data users to acquire data, the value of open data, the reward (penalty) from the regulators, and the data mining capability of data users to group strategies and uncover some regular patterns. Some meaningful results are also obtained through simulations, which can guide stakeholders to make better decisions in the future.
\end{abstract}

\begin{IEEEkeywords}
Open data, Evolutionary game theory, Evolutionary stable strategy, Data quality.
\end{IEEEkeywords}

%
\IEEEpeerreviewmaketitle

\section{Introduction}

\IEEEPARstart{W}{ith} the development of digitization and digital economy in current societies \cite{khalin2018digitalization}, the requirement for accessing the open data has increased, which in turn has driven data providers (e.g., governments, organizations, and enterprises) to make their data available over the internet and allow data users to reuse these data in various applications and online services \cite{goldfarb2019digital, jetzek2014data}. This kind of data supply and utilization is called open data initiatives, defined as a concept for enabling free access, using, and sharing of open data for the general public without any limitations \cite{neves2020impacts}. Open data can be utilized to analyze citizen participation and stakeholder decision-making \cite{cantador2020exploiting}. By using open data, governments, businesses, and individuals can create value and bring about social \cite{davies2016researching}, economic \cite{chetty2020economic}, and environmental benefits \cite{shang2021impacts}. Therefore, it is necessary to study open data for the development of society.

In response to this trend, government agencies and institutions around the world in recent years have been increasingly active in adopting and implementing open data initiatives \cite{altayar2018motivations}. However, there are still several challenges to achieving the full potential of open data and supporting all interested parties with the publication and consumption of this data \cite{dlamini2019data}. The open data initiatives abound with social, political, and economic interactions that their actions affect each other. The extensive literature on open data, which includes multi-stakeholders, such as the public sector that produces, collects, maintains, and disseminates large amounts of data, the business that needs the data to advertise their economic activities, and the public as potential users and providers, also concludes that different stakeholders have different requirements and influences on the interests of open data initiatives \cite{saxena2017usage, gonzalez2015multiple}. Moreover, the adoption of open data from different stakeholders is an extremely dynamic process involving multiple actions of stakeholders and changes as time progresses \cite{safarov2017utilization}. However, previous research on data providers has mainly focused on the empirical analysis from a binary logic that regards data as open or non-open dominating, while ``open'' and ``non-open'' are not absolutes in practice \cite{raman2012collecting}. This lacks an understanding of the interactions between different stakeholders, and strategic trends among them. Understanding these factors is necessary to facilitate the success of open data. In addition, most previous studies on open data have considered only data providers and users, while neglecting the existence of data regulators. The data regulator \cite{eichler2012open} plays an important role in open data by monitoring the data offered by data providers to data users and rewarding or punishing their behavior appropriately, which can lead to the development of open data in a good direction, i.e., data providers provide high quality data and data users actively acquire data \cite{zhang2020customer}.

Due to the rapid development of evolutionary game theory in recent years, it has been widely employed in studying and predicting evolutionary behavior in biology \cite{mcnamara2020game}, economics \cite{kabir2020evolutionary}, management \cite{tian2019evaluating, qu2022evolutionary}, social behaviors \cite{pi2022evolutionary, pi2022evolutionary2} and so on \cite{li2021three, li2020perception}. Besides, many recent reviews \cite{jusup2022social, capraro2021mathematical} have pointed out the fact that the study of evolutionary game theory has many applications across the social and natural sciences. Therefore, we utilize evolutionary game theory to provide a theoretical framework for our analysis of the open data. The most important approach focusing on the evolutionary game dynamics is the replicator, which was first defined for single species by Taylor and Jonker \cite{taylor1978evolutionary}, and then has been widely studied in \cite{wang2020evolutionary, li2021three, fan2020evolutionary}. Therefore, we investigate the open data based on the idea of evolutionary game theory in this paper. Concretely, considering bounded rationality and learning mechanisms, this study adopts the method from evolutionary game theory as a theoretical perspective to provide explanations and understanding of how open data development is embedded into different stakeholder strategic interactions. Under the context of open data, our research proposes an analytical framework that describes the dynamic interaction between data providers, users, and regulators, where data providers face strategic options (open high quality data and open low quality data), data users have an either-or strategy (acquire data or do not), and data regulators also have two alternative strategies (regulate open data and do not). In particular, we attempt to answer the following questions: How will the strategies of the three multi-stakeholders evolve under different conditions? What can be done to promote good open data development, i.e., data providers choose to open high quality data, and data users choose to acquire data with a positive attitude?

In particular, the primary contributions of this paper can be succinctly summarized as follows:

\begin{itemize}
\item We utilize evolutionary game theory to depict the dynamic interaction among data providers, users, and regulators within the realm of open data.
\item We introduce the data mining capabilities of data users and take the presence of data regulators into account, making our approach more aligned with real-world scenarios.
\item The evolutionary stable strategies of the three parties have been illustrated not only from mathematical standpoints but also through simulations.
\item The impacts of the parameters in the proposed evolutionary game model on group strategies are investigated from various perspectives.
\item Recommendations are made to support the sustainable development of open data based on the theoretical analysis and numerical simulation results.
\end{itemize}

The remaining of this paper is organized as follows. In Section \ref{Model}, we present a game model to study the open data, and two theorems are proven for analyzing the evolutionary stable strategy (ESS) based on the payoff matrix and the replicator dynamic system. After that, the simulations are conducted to verify the ESS and investigate the model parameters on the group behaviors in Section \ref{Simulation}. Finally, we summarize the conclusions for the implementation of open data and describe an outlook for future research in Section \ref{Conclusion}.

\section{An evolutionary game model of stakeholders in open data} \label{Model}

Exploring the game of open data can unlock the potential value of data and promote the development of the social economy. Therefore, effective modeling of the open data process is crucial. Most previous studies assumed that there are only data providers and users in the open data process, and ignored the fact that there are also data regulators. In this paper, we add data regulators to the game model and introduce the data mining capability of data users, which is substantially different from previous works. In this section, we mainly describe the calculation of the three types of stakeholders' payoffs based on reasonable assumptions and then establish the replicator dynamic system. Subsequently, we discuss the conditions under which the equilibrium point becomes the ESS in order to suggest decisions for sustainable open data initiatives.

\subsection{Model Description}

Advocating for open high quality data is a long-term process that requires data providers to continuously and steadily provide high quality data to society, which has attracted commentators, strategists, and practitioners from a variety of origins bringing different understandings and visions. In terms of the data providers, they are unlikely to be completely consistent in taking open high quality data actions due to the different data resources and policy goals, and their costs, as well as benefits, are calculated based on their actions. For example, for conservative data providers, they provide low quality data for fear of getting less reward than they pay for it. Additionally, after repeated learning and accumulation, data providers might adjust and improve their strategies according to the feedback in the early experience as time progresses. Besides, data users have different levels of need for open data. To gain an advantage in the digital market, data users take measures to utilize open data and create value for themselves. However, for data users with weak data mining capabilities and small data needs, they are more likely to take the strategy of not acquiring open data. Moreover, the data regulators can regulate the strategy between data providers and users on open data, and their strategies will be influenced by their own payoffs, which can essentially be attributed to the influence of the strategy of the three types of stakeholders.

Hereby, we utilize an evolutionary game theoretical framework in which the data providers, users, and regulators are the players in the game and the goal of the players is to maximize their own payoffs. In the process of the game, the three players obtain different payoffs based on different strategy interactions. On the basis of evolutionary game theory (EGT), the three players play the game under bounded rationality, and their decisions are independent and made from the perspective of information asymmetry, i.e., they do not know each other's strategies and their strategic choices are influenced by the results of the previous stage of the game. Concretely, the strategy choices of the data providers, users, and regulators are given below:

(A) data providers: usually have two strategies when making decisions:
\begin{description}
  \item[(i)] One is to open high quality data (we denote the strategy as cooperation, denoted by the symbol Cp), and the fraction is $x$;
  \item[(ii)] While the other is to open low quality data (we denote the strategy as defection, denoted by the symbol Dp), and the fraction is $1 - x$ for normalization.
\end{description}

(B) data users: also have two strategies when making decisions:
\begin{description}
  \item[(i)] One is to acquire data (we denote the strategy as cooperation, denoted by the symbol Cu), and the fraction is $y$;
  \item[(ii)] While the other is not to acquire data (we denote the strategy as defection, denoted by the symbol Du), and the fraction is $1 - y$ for normalization.
 \end{description}
  
 (C) data regulators: also have two strategies when making decisions:
\begin{description}
  \item[(i)] One is to regulate open data (we denote the strategy as cooperation, denoted by the symbol Cr), and the fraction is $z$;
  \item[(ii)] While the other is not to regulate open data (we denote the strategy as defection, denoted by the symbol Dr), and the fraction is $1 - z$ for normalization.
\end{description}

Among the data providers, some adopt the Cp strategy, while others adopt the Dp strategy. Analogously, for the data users and data regulators, some of them use the Cu and Cr strategy, while others utilize the Du and Dr strategy. Therefore, the proportion $x$ of data providers utilizing strategy Cp, the proportion $y$ of data users adopting strategy Cu, and the proportion $z$ of data regulators using strategy Cr follow $($x$, $y$, $z$) \in [0, 1] \times [0, 1] \times [0, 1]$. When the payoffs of the three players are lower than their respective average payoffs, rational players (data providers or data users or data regulators) will adjust their strategy during the game process to increase their payoffs, which yields $x$, $y$, and $z$ changing with time, namely, the proportion of the population choosing different strategies changes dynamically. Thus, we propose an EGT model to investigate the competition and cooperation among data providers, users, and regulators and further study the ESS of the three players. Fig. \ref{game_relationship} depicts the game relationship among data providers, users, and regulators about open data.

\begin{figure}[ht]
\centering
\includegraphics[scale=0.40]{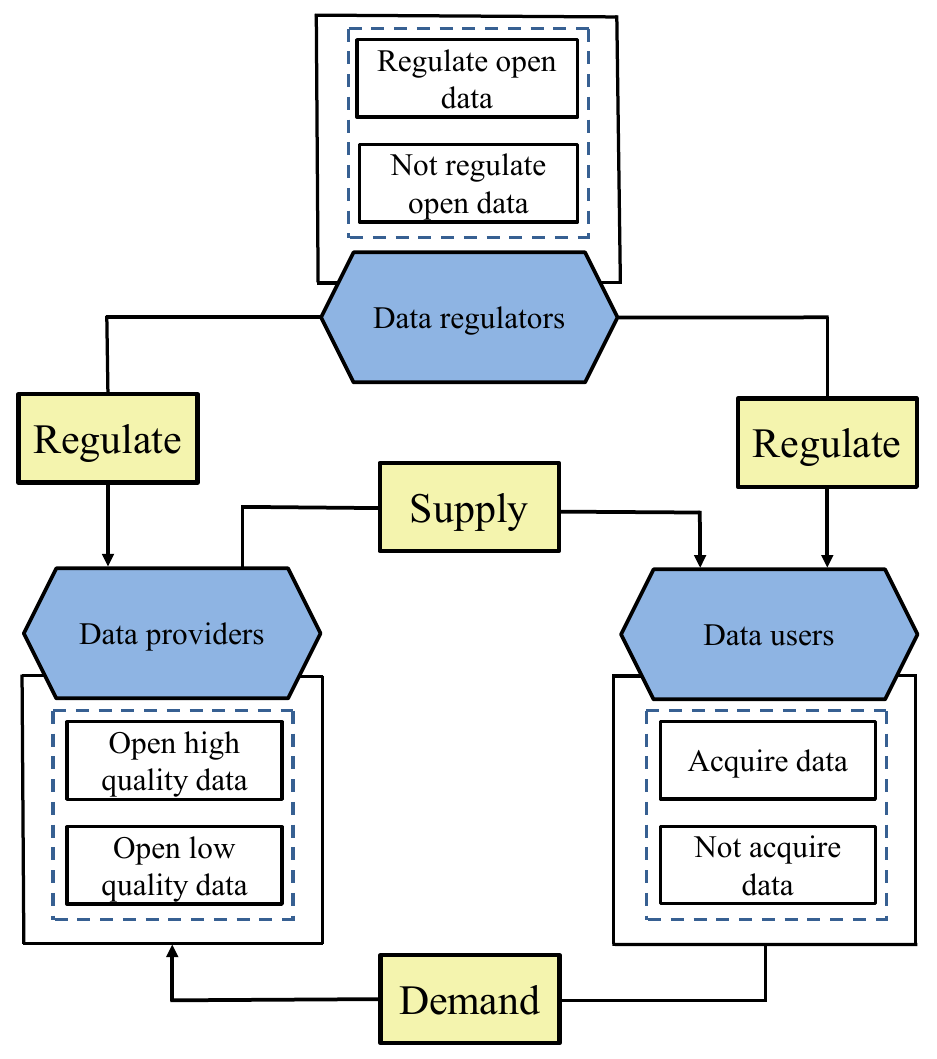}
\caption{\textbf{The game relationship among data providers, users, and regulators.} Each of them has two different strategies to choose from, and the combination of strategies among them will have an impact on their payoffs.}
\label{game_relationship}
\end{figure}

\subsection{Assumptions and Notations}

Herein, we make the assumptions and notations to facilitate modeling. We propose a three-party evolutionary game model in which the data providers, users, and regulators are each infinite populations, respectively. As aforementioned pointed out, there are two strategies for them. The data provider is supposed to consider the cost of sharing data in terms of labor, time, and the potential loss of privacy, weigh the benefits and losses and choose to adopt Cp or Dp strategy. We suppose that the cost of sharing high quality data by the data provider is $C_1$ and 0 for sharing low quality data since no particular collation is required. The value of high quality data is $V_1$, while that of low quality data is $V_2$. If the data providers provide high quality data that are acquired by data users, the satisfaction of data users and the public to the data provider will be enhanced and the data provider's credibility will be strengthened. Then, in the long run, the data providers and users share synergistic benefits $R_1$. On the contrary, if the low quality data provided by data providers are positively acquired by data users, the credibility of the data provider will be reduced and the payoff of the data provider will be $-L_1$. For the purpose of promoting the emergence of good phenomena, where data providers provide high quality data and data users acquire data, the data providers are rewarded (penalized) with a certain amount of $F$ for providing high (low) quality data if they are under strict regulation by the data regulators.

For the data user, the cost of acquiring the open data (including but not limited to collecting and collating the data) is $C_2$, releasing the potential economic value of the data, while different data users have different data mining capabilities (denote it as $\alpha$). However, if the data users do not acquire high quality data, while other competing data users acquire it and exploit the potential value, it leads to a payoff $-L_2$ to the data user in the long term. Analogously, the data users are rewarded (penalized) with a certain amount of $F$ for acquiring (not acquiring) data if they are under strict regulation by the data regulator to promote the emergence of good phenomena.

For the data regulator, the cost of performing the regulation is $C_3$, and the income for regulation is $P$. If a good phenomenon emerges under strict regulation by the data regulator, the data regulator receives the reward $R_2$.

The notations utilized in this model are briefly summarized in Tab. \ref{notations}.

\begin{center}
\begin{table*}
\centering
\caption{Notations of the game model}
\label{notations}
\renewcommand\arraystretch{1.2}
\begin{tabular}{cc}
\Xhline{1.5pt}
Notations & Descriptions \\
\Xhline{0.5pt}
$x$&The fraction of the data providers choosing to open high quality data\\
$y$&The fraction of the data users choosing to acquire data\\
$z$&The fraction of the data regulators choosing to regulate open data\\
$C_1$&The cost of the data providers providing high quality data\\
$C_2$&The cost of the data users acquiring data\\
$C_3$&The cost of the data regulators regulating open data\\
$V_1$&The value of the high quality data\\
$V_2$&The value of the low quality data\\
$R_1$&The synergistic benefits of data providers and users when high quality data is acquired by data users\\
$R_2$&The reward for data regulators performing regulation when data providers provide high quality data and data users acquire data\\
$L_1$&The loss of data providers when low quality data is acquired by data users\\
$L_2$&The loss of data users when high quality data is acquired by competitors\\
$P$&The income for data regulators choosing to regulate open data\\
$F$&The reward (penalty) for data providers and users by data regulators under the regulation of open data\\
$\alpha$&The data mining capability of data users\\
\Xhline{1.5pt}
\end{tabular}
\end{table*}
\end{center}

\subsection{The Calculation of the Three Players' Payoffs}

\begin{table*}[htbp]
\centering
\caption{Payoff matrix}
\label{matrix}
\begin{tabular}{|cc|c|cc|}
\hline
\multicolumn{2}{|c|}{\multirow{2}{*}{}}                                               & \multirow{2}{*}{data provider}                                           & \multicolumn{2}{c|}{data user}                                                                                                           \\ \cline{4-5} 
\multicolumn{2}{|c|}{}                                                                &                                                                          & \multicolumn{1}{c|}{\begin{tabular}[c]{@{}c@{}}acquire\\ (y)\end{tabular}} & \begin{tabular}[c]{@{}c@{}}not acquire\\ (1-y)\end{tabular} \\ \hline
\multicolumn{1}{|c|}{\multirow{4}{*}{data regulator}} & \multirow{2}{*}{\begin{tabular}[c]{@{}c@{}}regulate\\ (z)\end{tabular}}    & \begin{tabular}[c]{@{}c@{}}provide high quality data\\ (x)\end{tabular}  & \multicolumn{1}{c|}{$P+R_2-C_3,R_1+F-C_1,R_1+F+\alpha V_1-C_2$}                                                 & $P-C_3,F-C_1,-F-L_2$                                                     \\ \cline{3-5} 
\multicolumn{1}{|c|}{}                                &                               & \begin{tabular}[c]{@{}c@{}}provide low quality data\\ (1-x)\end{tabular} & \multicolumn{1}{c|}{$P-C_3,-F-L_1,F+\alpha V_2-C_2$}                                                     & $P-C_3,-F,-F$                                                           \\ \cline{2-5} 
\multicolumn{1}{|c|}{}                                & \multirow{2}{*}{\begin{tabular}[c]{@{}c@{}}not regulate\\ (1-z)\end{tabular}} & \begin{tabular}[c]{@{}c@{}}provide high quality data\\ (x)\end{tabular}  & \multicolumn{1}{c|}{$0,R_1-C_1,R_1+\alpha V_1-C_2$}                                                     & $0,-C_1,-L_2$                                                           \\ \cline{3-5} 
\multicolumn{1}{|c|}{}                                &                               & \begin{tabular}[c]{@{}c@{}}provide low quality data\\ (1-x)\end{tabular} & \multicolumn{1}{c|}{$0,-L_1,\alpha V_2-C_2$}                                                     & $0,0,0$                                                           \\ \hline
\end{tabular}
\end{table*}

As mentioned above, there are three players (data providers, users, and regulators) in the player set, and different players choose from different strategy spaces. Based on the aforementioned assumptions and notations, the payoffs for players depend on the strategy choices of them, which are illustrated in Tab. \ref{matrix}. Let $E_{p1}$ and $E_{p2}$ represent the expected payoffs of the data providers who provide high quality data and low quality data, respectively. According to the payoff matrix in Tab. \ref{matrix}, $E_{p1}$ and $E_{p2}$ consist of four parts. We take $E_{p1}$ as an example, $E_{p2}$ is similar. The four parts of $E_{p1}$ are:

\begin{description}
  \item[(i)] The probability that the data user chooses to acquire a strategy is multiplied by the probability that the data regulator selects to regulate strategy, and then multiplied by the payoff when the data providers perform to provide high quality data strategy;
  \item[(ii)] The probability of the data user selecting not to acquire a strategy times the probability of the data regulator choosing to regulate strategy, and then times the payoff when the data providers adopt to provide high quality data strategy;
  \item[(iii)] The probability that the data user performs to acquire a strategy is multiplied by the probability that the data regulator chooses not to regulate strategy, and then multiplied by the payoff when the data providers select to provide high quality data strategy;
  \item[(iv)] The probability of the data user adopting not to acquire a strategy multiplies the probability of the data regulator selecting not to regulate strategy and then multiplies the payoff when the data providers perform to provide high quality data strategy.
\end{description}

Therefore, according to the payoff matrix shown in Tab. \ref{matrix}, we yield

\begin{equation}
\begin{aligned}
E_{p1}&=yz(R_1+F-C_1)+(1-y)z(F-C_1)
\\&+y(1-z)(R_1-C_1)+(1-y)(1-z)(-C_1)
\\&=zF+yR_1-C_1.
\end{aligned}
\end{equation}

Analogously, we obtain the expected payoffs $E_{p2}$ of the data providers providing low quality data, which can be expressed as

\begin{equation}
\begin{aligned}
E_{p2}&=yz(-F-L_1)+(1-y)z(-F)+y(1-z)(-L_1)
\\&=-zF-yL_1.
\end{aligned}
\end{equation}

According to the payoff matrix, the expected payoff $E_{u1}$ and $E_{u2}$ for the data user to select acquire strategy and not acquire strategy and average expected payoff could be denoted as in the following equations, where the calculation process is similar to $E_{p1}$ and $E_{p2}$,

\begin{equation}
\begin{small}
\begin{aligned}
E_{u1}&=xz(R_1+F+\alpha V_1-C_2)+(1-x)z(F+\alpha V_2-C_2)
\\&+x(1-z)(R_1+\alpha V_1-C_2)+(1-x)(1-z)(\alpha V_2-C_2)
\\&=zF+xR_1+x\alpha (V_1-V_2)+\alpha V_2-C_2
\end{aligned}
\end{small}
\end{equation}
and
\begin{equation}
\begin{aligned}
E_{u2}&=xz(-F-L_2)+(1-x)z(-F)+x(1-z)(-L_2)
\\&=-zF-xL_2.
\end{aligned}
\end{equation}

Besides, we utilize $E_{r1}$ and $E_{r2}$ to denote the expected payoff of the data regulator selecting the regulate strategy and not regulate strategy, respectively. Similar to the calculation process for $E_{p1}$ and $E_{p2}$, we have

\begin{equation}
\begin{aligned}
E_{r1}&=xy(P+R_2-C_3)+(1-x)y(P-C_3)
\\&+x(1-y)(P-C_3)+(1-x)(1-y)(P-C_3)
\\&=xyR_2+P-C_3
\end{aligned}
\end{equation}
and
\begin{equation}
E_{r2}=0.
\end{equation}

\subsection{The Replicator Dynamic System}

Based on the aforementioned payoff analysis, the average payoffs of the data providers ($\overline{E_p}$), users ($\overline{E_u}$), and regulators ($\overline{E_r}$) can be denoted as $\overline{E_p} = xE_{p1}+(1-x)E_{p2}$, $\overline{E_u} = yE_{u1}+(1-y)E_{u2}$, and $\overline{E_r} = zE_{r1}+(1-z)E_{r2}$, respectively.

As we stated before, the proportion $x$ of data providers adopting strategy Cp, the proportion $y$ of data users utilizing strategy Cu, and the proportion $z$ of data regulators using strategy Cr will change as time progresses, i.e., $x$, $y$, and $z$ are all functions of time $t$. According to the work by Friedman (1991) \cite{friedman1991evolutionary} and Taylor and Jonker (1978) \cite{taylor1978evolutionary}, the percentage of players grows, when the payoff of those players is greater than the average payoff of the entire population and the growth rate is represented by differential equations in continuous time. Besides, the growth rate of a strategy selected by the players should be equal to its payoff less the population average payoff among each player. Therefore, the replicator dynamic equations of opening high quality data selected by the data providers ($F_x$), acquiring data selected by the data users ($F_y$), and regulating open data selected by the data regulators ($F_z$) are as follows:

\begin{equation}\label{F_x}
\begin{cases}
\begin{aligned}
F_x&=\frac{dx}{dt}=x(E_{p1}-\overline{E_p})=x(1-x)(E_{p1}-E_{p2})\\&=x(1-x)(2zF+yR_1+yL_1-C_1)\\
F_y&=\frac{dy}{dt} =y(E_{u1}-\overline{E_u})=y(1-y)(E_{u1}-E_{u2})\\&=y(1-y)(2zF+xR_1+xL_2+x\alpha (V_1-V_2)\\&+\alpha V_2-C_2)\\
F_z&=\frac{dz}{dt} =z(E_{r1}-\overline{E_r})=z(1-z)(E_{r1}-E_{r2})\\&=z(1-z)(xyR_2+P-C_3).\\
\end{aligned}
\end{cases}
\end{equation}

Eq. \ref{F_x} is the continuous frequency dynamic system for the data provider, user, and regulator population. From the above derivation, we obtain the replicator dynamic equations of the three populations, which demonstrate that the proportion of a certain strategy adopted by the game player is a time-varying variable. Besides, the number and the positive or negative sides of the equations reflect the speed and direction of changes in the proportion of population strategies as time progresses.

\subsection{Model Analysis and Discussion}\label{Model discussion}

Now, we discuss the evolutionary stable strategy (ESS) of the replicator dynamic system. From Eq. \ref{F_x}, two theorems can be derived as follows, one of them provides the equilibrium points of the replicator dynamic system, while the other gives the conditions for the equilibrium point to become an ESS.

\newtheorem{thm}{Theorem}

\begin{thm}\label{equilibrium points}
For the replicator dynamic system represented by Eq. \ref{F_x}, the equilibrium points of the system are eight fixed points (0, 0, 0), (0, 0, 1), (0, 1, 0), (0, 1, 1), (1, 0, 0), (1, 0, 1), (1, 1, 0), (1, 1, 1) and one mixed point ($x^*$, $y^*$, $z^*$), which satisfies $F_{x^*}=0$, $F_{y^*}=0$, and $F_{z^*}=0$, respectively.
\end{thm}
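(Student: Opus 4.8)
The plan is to find all equilibrium points by setting the right-hand sides of the replicator dynamic system in Eq.~\ref{F_x} simultaneously equal to zero, i.e., solving $F_x = 0$, $F_y = 0$, and $F_z = 0$. Since each equation factors as a product, the system decomposes naturally into cases. Specifically, $F_x = x(1-x)\,g_x$ with $g_x = 2zF + yR_1 + yL_1 - C_1$; $F_y = y(1-y)\,g_y$ with $g_y = 2zF + xR_1 + xL_2 + x\alpha(V_1 - V_2) + \alpha V_2 - C_2$; and $F_z = z(1-z)\,g_z$ with $g_z = xyR_2 + P - C_3$. Each equation vanishes precisely when one of its factors vanishes.

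First I would exploit the product structure: $F_x = 0$ forces $x \in \{0, 1\}$ or $g_x = 0$, and analogously for the other two equations. The eight corner points listed in the theorem arise from the $2^3 = 8$ combinations where each of $x$, $y$, $z$ is pinned to a boundary value in $\{0,1\}$; in every such case the factors $x(1-x)$, $y(1-y)$, $z(1-z)$ all vanish automatically, so all three dynamics are zero regardless of the $g$-terms. This immediately verifies that $(0,0,0), (0,0,1), \ldots, (1,1,1)$ are equilibria. I would state this as the straightforward half of the argument.

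Next I would address the interior (mixed) equilibrium. For a point with $x,y,z \in (0,1)$, the boundary factors are all nonzero, so equilibrium requires the three linear/bilinear conditions $g_x = 0$, $g_y = 0$, $g_z = 0$ to hold simultaneously. This is exactly the system $F_{x^*} = 0$, $F_{y^*} = 0$, $F_{z^*} = 0$ the theorem refers to; I would simply record that such a solution $(x^*, y^*, z^*)$, when it lies in the open cube, constitutes the ninth equilibrium point. Solving explicitly is possible but unnecessary for the statement as phrased, since the theorem only asserts existence of a point satisfying those three equations rather than giving closed forms.

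The main obstacle is really a matter of completeness rather than difficulty: I must argue that no \emph{other} equilibria exist, in particular the ``mixed-boundary'' points where some coordinates are in $(0,1)$ while others are at $0$ or $1$. The theorem's enumeration appears to omit these, so I would check whether the conditions $g_x = 0$, $g_y = 0$, $g_z = 0$ can be consistently satisfied on the faces and edges of the cube for generic parameter values. For instance, a face equilibrium with $z \in \{0,1\}$ fixed and $x, y \in (0,1)$ would require $g_x = g_y = 0$ with $z$ substituted; whether this yields admissible solutions depends on the parameters, and under the implicit genericity assumption one expects these to be non-generic or to fall outside $[0,1]$. I would flag that the clean list of eight vertices plus one interior point is the expected generic picture, and that verifying the vertices and characterizing the interior point via the three vanishing conditions is the substance of the proof.
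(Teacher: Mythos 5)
Your decomposition is exactly the paper's: the eight vertices are equilibria because the boundary factors $x(1-x)$, $y(1-y)$, $z(1-z)$ all vanish there, and an interior equilibrium must satisfy $g_x=g_y=g_z=0$, which is precisely the system the paper records as Eq.~\ref{internal}. Up to that point the two arguments coincide --- the paper declares the vertices ``obvious'' and characterizes the mixed point by the three vanishing conditions without solving them, just as you propose.

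The difference is your final paragraph, and your instinct there is better than your proposed resolution. The completeness issue you flag is genuine, and the paper's proof never addresses it: it simply asserts that solving $F_x=F_y=F_z=0$ yields the eight vertices plus $(x^*,y^*,z^*)$ and stops. But your hope that mixed-boundary equilibria are non-generic is only half right. It holds for \emph{edges}: since $g_x$, $g_y$, $g_z$ do not depend on $x$, $y$, $z$ respectively, on any edge the single relevant factor is a constant in the varying coordinate, so it vanishes only under an exact parameter coincidence (in which case the whole edge is a continuum of equilibria). It fails for \emph{faces}: on the face $z=0$, the point with $y=C_1/(R_1+L_1)$ and $x=(C_2-\alpha V_2)/(R_1+L_2+\alpha(V_1-V_2))$ kills $g_x$ and $g_y$, while $F_z=0$ automatically because $z=0$; whenever these two values lie in $(0,1)$ --- an open set of parameters, e.g.\ $R_1=L_1=1$, $C_1=1$, $\alpha=1$, $V_2=1$, $V_1=2$, $L_2=0$, $C_2=2$ gives the equilibrium $(1/2,\,1/2,\,0)$ --- this is a genuine equilibrium that is neither a vertex nor the paper's mixed point, since at it $g_z=R_2/4+P-C_3\neq 0$ in general. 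The same construction works on the faces $z=1$, $x=1$, and $y=1$. So the enumeration in Theorem~\ref{equilibrium points} is incomplete as literally stated, and neither your genericity appeal nor the paper's silence closes that gap; a fully correct proof must either enumerate these face equilibria as well or explain why they are discarded (the paper's subsequent ESS analysis in fact only ever tests the eight vertices, which is presumably why the omission goes unnoticed there).
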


\begin{proof}
The fixed eight points are obvious. According to the stability theory of differential equations, a point of the replicator dynamic system represented by Eq. \ref{F_x} is an equilibrium point if it follows the condition
\begin{equation}\label{condition}
F_x = 0, \,\, F_y = 0, \,\, and \,\, F_z = 0 
\end{equation}
and $($x$, $y$, $z$) \in [0, 1] \times [0, 1] \times [0, 1]$. By solving Eq. \ref{condition}, we can not only get the above eight fixed points, but also get that ($x$, $y$, $z$) = ($x^*$, $y^*$, $z^*$), which holds $x^* \in [0, 1]$, $y^* \in [0, 1]$, $z^* \in [0, 1]$, and
\begin{equation}\label{internal}
\begin{cases}
2z^*F+y^*R_1+y^*L_1-C_1 = 0\\
2z^*F+x^*R_1+x^*L_2+x^*\alpha(V_1-V_2)+\alpha V_2-C_2 = 0\\
x^*y^*R_2+P-C_3 = 0
\end{cases}
\end{equation}
, i.e., it is also the solution of Eq. \ref{condition}; therefore, it is one of the equilibrium points.

The result follows.
\end{proof}

The goal of our study is to find the ESS of the evolutionary game among these equilibrium points. Since ESS only appears in pure points, the mixed point ($x^*$, $y^*$, $z^*$) is first excluded. Then, we derive the following theorem, which gives the conditions for eight fixed points to become ESS.

\begin{center}
\begin{table*}[htbp]
\centering
\caption{The conditions in which eight fixed points can be ESS points}
\label{conditions}
\renewcommand\arraystretch{1.2}
\setlength{\tabcolsep}{15mm}
\begin{tabular}{cc}
\Xhline{1.0pt}
ESS & Conditions \\
\Xhline{1.0pt}
(0, 0, 0) & $C_1>0$, $\alpha V_2<C_2$, and $P<C_3$ \\
\Xhline{1.0pt}
(0, 0, 1) & $2F<C_1$, $2F+\alpha V_2<C_2$, and $C_3<P$ \\
\Xhline{1.0pt}
(0, 1, 0) & $R_1+L_1<C_1$, $C_2<\alpha V_2$, and $P<C_3$ \\
\Xhline{1.0pt}
(0, 1, 1) & $2F+R_1+L_1<C_1$, $C_2<\alpha V_2+2F$, and $C_3<P$; \\
\Xhline{1.0pt}
(1, 0, 0) & $C_1<0$, $R_1+L_2+\alpha V_1<C_2$, and $P<C_3$ \\
\Xhline{1.0pt}
(1, 0, 1) & $C_1<2F$, $2F+R_1+L_2+\alpha V_1<C_2$, and $C_3<P$ \\
\Xhline{1.0pt}
(1, 1, 0) & $C_1<R_1+L_1$, $C_2<R_1+L_2+\alpha V_1$, and $R_2+P<C_3$ \\
\Xhline{1.0pt}
(1, 1, 1) & $C_1<2F+R_1+L_1$ and $C_2<2F+R_1+L_2+\alpha V_1$, and $C_3<R_2+P$ \\
\Xhline{1.0pt}
\end{tabular}
\end{table*}
\end{center}

\begin{figure*}[htbp]
\begin{equation}  \label{Jacobi matrix}
\begin{normalsize}
\begin{aligned}
&J=\left( \begin{matrix}
	\frac{\partial F_x}{\partial x} & \frac{\partial F_x}{\partial y} & \frac{\partial F_x}{\partial z}\\
	\frac{\partial F_y}{\partial x} & \frac{\partial F_y}{\partial y} & \frac{\partial F_y}{\partial z}\\
	\frac{\partial F_z}{\partial x} & \frac{\partial F_z}{\partial y} & \frac{\partial F_z}{\partial z}\\
\end{matrix} \right) \\
&=\left( \begin{matrix}
	(1-2x)(2zF+yR_1+yL_1-C_1) & x(1-x)(R_1+L_1) & 2x(1-x)F\\
	\multirow{2}{*}{ $y(1-y)(R_1+L_2+\alpha (V_1-V_2))$ }& (1-2y)(2zF+xR_1+xL_2 & \multirow{2}{*}{ $2y(1-y)F$ }\\
         & +x\alpha (V_1-V_2)+\alpha V_2-C_2) & \\
	yz(1-z)R_2 & xz(1-z)R_2 & (1-2z)(xyR_2+P-C_3)\\
\end{matrix} \right)
\end{aligned}
\end{normalsize}
\end{equation}
\end{figure*}

\begin{thm}\label{ESS conditions}
(i) The equilibrium point (0, 0, 0) is the ESS, if $C_1>0$, $\alpha V_2<C_2$, and $P<C_3$;

(ii) The equilibrium point (0, 0, 1) is the ESS, if $2F<C_1$, $2F+\alpha V_2<C_2$, and $C_3<P$;

(iii) The equilibrium point (0, 1, 0) is the ESS, if $R_1+L_1<C_1$, $C_2<\alpha V_2$, and $P<C_3$;

(iv) The equilibrium point (0, 1, 1) is the ESS, if $2F+R_1+L_1<C_1$, $C_2<\alpha V_2+2F$, and $C_3<P$;

(v) The equilibrium point (1, 0, 0) is the ESS, if $C_1<0$, $R_1+L_2+\alpha V_1<C_2$, and $P<C_3$;

(vi) The equilibrium point (1, 0, 1) is the ESS, if $C_1<2F$, $2F+R_1+L_2+\alpha V_1<C_2$, and $C_3<P$;

(vii) The equilibrium point (1, 1, 0) is the ESS, if $C_1<R_1+L_1$, $C_2<R_1+L_2+\alpha V_1$, and $R_2+P<C_3$;

(viii) The equilibrium point (1, 1, 1) is the ESS, if $C_1<2F+R_1+L_1$ and $C_2<2F+R_1+L_2+\alpha V_1$.
\end{thm}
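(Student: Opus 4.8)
The plan is to certify each candidate corner as an ESS by the standard linearization (Lyapunov indirect) argument for replicator dynamics: a pure-strategy equilibrium of the system \eqref{F_x} is an ESS exactly when it is locally asymptotically stable, i.e.\ when every eigenvalue of the Jacobian evaluated there has negative real part. The Jacobian $J$ is already assembled in \eqref{Jacobi matrix}, so the first step is simply to read off its entries; the mixed interior point has already been excluded, so only the eight corners of $[0,1]^3$ need to be tested.

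The key structural observation that makes the whole computation routine is that \emph{every} off-diagonal entry of $J$ carries a factor of $x(1-x)$, $y(1-y)$, or $z(1-z)$. At each of the eight fixed points every coordinate is $0$ or $1$, so all three of these factors vanish and $J$ collapses to a diagonal matrix. Its eigenvalues are therefore exactly the three diagonal entries
\begin{align*}
\lambda_1 &= (1-2x)(2zF+yR_1+yL_1-C_1),\\
\lambda_2 &= (1-2y)\big(2zF+xR_1+xL_2+x\alpha(V_1-V_2)+\alpha V_2-C_2\big),\\
\lambda_3 &= (1-2z)(xyR_2+P-C_3),
\end{align*}
evaluated at the point in question. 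These are all real, so local asymptotic stability is equivalent to the three simultaneous inequalities $\lambda_1<0$, $\lambda_2<0$, $\lambda_3<0$.

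The remaining work is bookkeeping. The prefactor $(1-2x)$ equals $+1$ when $x=0$ and $-1$ when $x=1$ (and likewise for $y,z$), so it merely flips the direction of each inequality according to whether the coordinate is a $0$ or a $1$. For example, at $(0,0,0)$ the eigenvalues are $-C_1$, $\alpha V_2-C_2$, $P-C_3$, yielding $C_1>0$, $\alpha V_2<C_2$, $P<C_3$, which is case (i); at $(1,1,1)$ the sign flips, together with the simplification $\alpha(V_1-V_2)+\alpha V_2=\alpha V_1$ at $x=1$, give $C_1<2F+R_1+L_1$, $C_2<2F+R_1+L_2+\alpha V_1$, $C_3<R_2+P$, which is case (viii). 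The other six corners are identical substitutions, and the resulting sign conditions coincide row-by-row with Table~\ref{conditions}.

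I do not anticipate a genuine analytical obstacle; the subtleties are organizational rather than deep. First, one must be careful with the algebraic collapse of the $\alpha$-terms in $\lambda_2$ at the corners with $x=1$. Second, because the fixed points lie on the boundary of the cube, I would remark (as is standard for replicator dynamics on the boundary) that at each corner the admissible directions are the three incident edges, along which the dynamics are governed precisely by $\lambda_1,\lambda_2,\lambda_3$; hence a strictly negative value of each diagonal entry controls every admissible one-sided direction, so the eigenvalue test is indeed sufficient for the corner to be an ESS. Finally, I would note that case (viii) as stated omits the third inequality $C_3<R_2+P$ that appears in the corresponding row of Table~\ref{conditions}, and I would reinstate it for consistency with the computation above.
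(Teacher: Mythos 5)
Your proposal is correct and follows essentially the same route as the paper's own proof: evaluating the Jacobian of Eq.~\ref{Jacobi matrix} at each corner point, where it becomes diagonal (a fact you make explicit and the paper uses implicitly), and requiring all three eigenvalues to be negative, with the paper likewise computing cases (i) and (ii) explicitly and treating the rest ``analogously.'' Your closing observation is also accurate: Table~\ref{conditions} does list the third condition $C_3 < R_2 + P$ for $(1,1,1)$, which the theorem statement omits; the authors effectively justify this omission later by assuming $P > C_3$ holds in any realistic regulatory setting, but reinstating it, as you suggest, is what the eigenvalue computation actually requires.
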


\begin{proof}
According to the method proposed by Friedman, the stability of the equilibrium points of a dynamic system described by a differential equation can be obtained from the local stability analysis of the Jacobi matrix. Based on the Eq. \ref{F_x}, the Jacobi matrix is expressed in Eq. \ref{Jacobi matrix}.

According to the evolutionary game, the equilibrium point is the ESS only if all the eigenvalues of the Jacobi matrix shown in Eq. \ref{Jacobi matrix} are negative. When the point is the internal equilibrium point ($x^*$, $y^*$, $z^*$), as stated in Eq. \ref{internal}, the trace of the Jacobian matrix is $tr(J)=(1-2x^*)(2z^*F+y^*R_1+y^*L_1-C_1)+(1-2y^*)(2z^*F+x^*R_1+x^*L_2+x^*\alpha (V_1-V_2)+\alpha V_2-C_2)+(1-2z^*)(x^*y^*R_2+P-C_3)=0$. And since the trace of the Jacobian matrix satisfies $tr(J)=\lambda_1+\lambda_2+\lambda_3=0$, where $\lambda_1$, $\lambda_2$, and $\lambda_3$ are the three eigenvalues of the Jacobi matrix when the equilibrium point is ($x^*$, $y^*$, $z^*$), which cannot make all three eigenvalues negative, i.e., it is impossible for the internal equilibrium point ($x^*$, $y^*$, $z^*$) to be ESS. Subsequently, we prove the conditions for the other eight fixed equilibrium points to become ESS.

(i) For the equilibrium point (0, 0, 0), we get that the three eigenvalues of the Jacobi matrix at this time are $\lambda_1=-C_1$, $\lambda_2=\alpha V_2-C_2$, and $\lambda_3=P-C_3$. If we expect (0, 0, 0) to be ESS, then the conditions $\lambda_1<0$, $\lambda_2<0$, and $\lambda_3<0$ should be followed, i.e., we yield $C_1>0$, $\alpha V_2<C_2$, and $P<C_3$;

(ii) For the equilibrium point (0, 0, 1), we obtain that the three eigenvalues of the Jacobi matrix at this time are $\lambda_1=2F-C_1$, $\lambda_2=2F+\alpha V_2-C_2$, and $\lambda_3=C_3-P$. If we expect (0, 0, 1) to be ESS, then the conditions $\lambda_1<0$, $\lambda_2<0$, and $\lambda_3<0$ should be satisfied, i.e., we have $2F<C_1$, $2F+\alpha V_2<C_2$, and $C_3<P$;

Analogously, we can derive conditions for the other 6 equilibrium points to become ESS.

The results follow.
\end{proof}

Therefore, the conditions in which the eight fixed points are ESS points are displayed in Tab. \ref{conditions}. It is worth noting that in contrast to symmetric games, such as the Chicken game and the Snowdrift game, the internal equilibrium point of the asymmetric game presented in the model of this paper is difficult to become stable (absorbing), and it is not an ESS, which is in line with previous studies \cite{habib2022evolutionary, tanimoto2021sociophysics}.

We combine the actual situation of the three-party game in reality. The cost of providing high quality data to the data provider should be greater than 0, i.e., $C_1>0$. Hence, the condition that the equilibrium point (1, 0, 0) becomes ESS cannot be followed. Besides, the data regulator regulates the quality of open data and needs to hold that the cost paid to the data regulator to regulate it should be greater than the cost of the data regulator so that it can effectively promote sharing, i.e., it needs to satisfy $P>C_3$. Therefore, the conditions for equilibrium points (0, 0, 0), (0, 1, 0), and (1, 1, 0) to become ESS are not held. Then, the proposed model has a total of four ESSs when their corresponding conditions are fulfilled.

This ends the modeling section. In brief, we first described the evolutionary game model for the stakeholders in open data, then illustrated the calculation of players' payoffs after making some necessary assumptions and notations, and finally discussed the model to obtain the conditions that an equilibrium point becomes ESS based on the replicator dynamic equations.

\section{Numerical simulation}  \label{Simulation}

In this section, with the purpose of confirming the previous theory, we use numerical simulations to characterize the evolutionary path and analyze the sensitivity of the models. In order to solve the replicator dynamic equations of the model presented in Eq. \ref{F_x} and intuitively observe the dynamic evolution of the strategies of the three players, we employ the functions $integrate.odeint$ of the package $scipy$ and $pyplot$ of the package $matplotlib$ in Python system simulation tool, respectively.

\subsection{The Evolutionary Trajectories of Three Players}

\begin{center}
\begin{figure*}[htp]
\begin{minipage}{0.5\linewidth}
\vspace{3pt}
\hspace{5pt}
\centerline{\includegraphics[scale = 0.5]{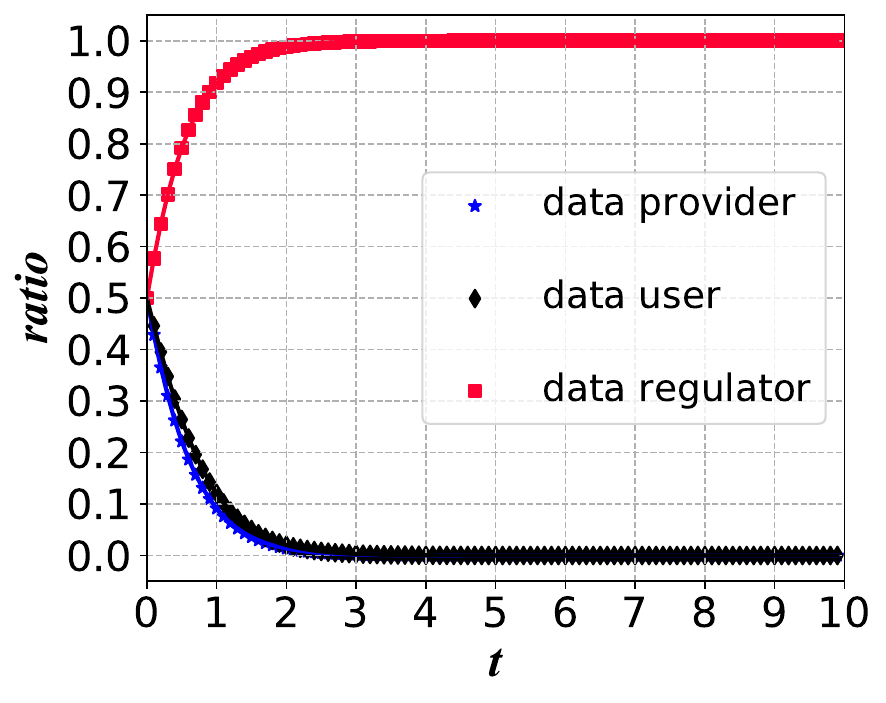}}
\centerline{(a) ESS(0, 0, 1)}
\end{minipage}
\begin{minipage}{0.55\linewidth}
\label{ESS(0,0,1)}
\vspace{3pt}
\hspace{5pt}
\centerline{\includegraphics[scale = 0.5]{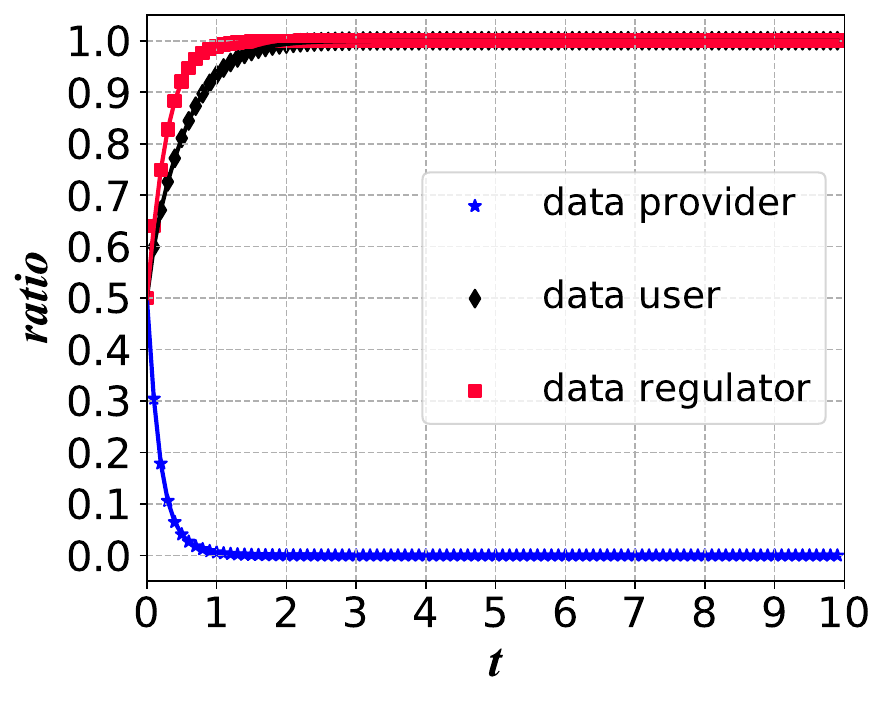}}
\label{ESS(0,1,1)}
\centerline{(b) ESS(0, 1, 1)}
\end{minipage}
\begin{minipage}{0.5\linewidth}
\label{ESS(1,0,1)}
\vspace{3pt}
\centerline{\includegraphics[scale = 0.5]{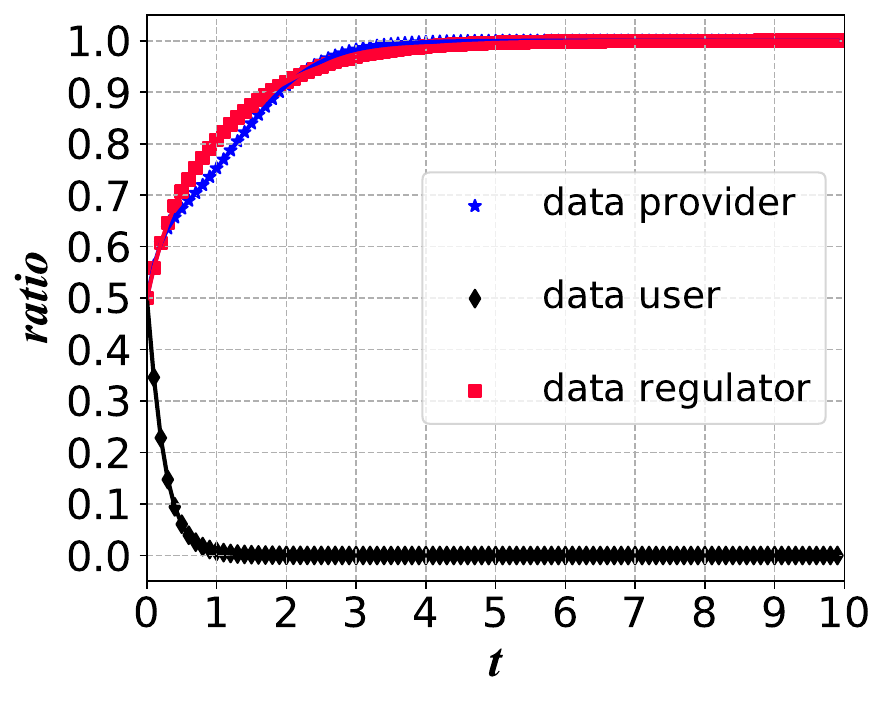}}
\centerline{(c) ESS(1, 0, 1)}
\end{minipage}
\begin{minipage}{0.55\linewidth}
\label{ESS(1,1,1)}
\vspace{3pt}
\centerline{\includegraphics[scale = 0.5]{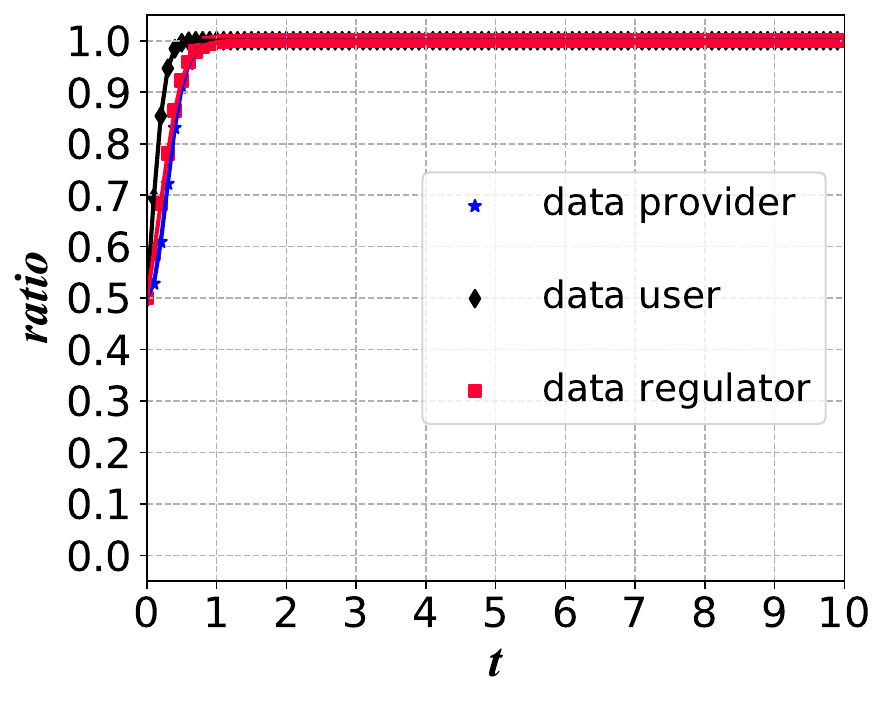}}
\centerline{(d) ESS(1, 1, 1)}
\end{minipage}
\caption{\textbf{The Evolutionary Trajectories of Three Players.} The $x$-axis and the $y$-axis are set as the time and the cooperation ratio separately. (a) The parameters are set as $C_1 = 8, C_2 = 10, C_3 = 6, \alpha = 0.7, V_1 = 5, V_2 = 3, R_1 = 2, L_1 = 2, L_2 = 2, P = 8, R_2 = 5$, and $F = 3$ to follow the conditions for (0, 0, 1) to become ESS. (b) The parameters are set as $C_1 = 15, C_2 = 10, C_3 = 6, \alpha = 0.8, V_1 = 10, V_2 = 8, R_1 = 4, L_1 = 2, L_2 = 5, P = 10, R_2 = 8$, and $F = 3$ to satisfy the condition for (0, 1, 1) to become ESS. (c) The parameters are set as $C_1 = 4, C_2 = 13, C_3 = 6, \alpha = 0.2, V_1 = 8, V_2 = 5, R_1 = 3, L_1 = 5, L_2 = 1, P = 7, R_2 = 6$, and $F = 3$ to obey the condition for (1, 0, 1) to become ESS. (d) The parameters are set as $C_1 = 8, C_2 = 6, C_3 = 7, \alpha = 0.7, V_1 = 10, V_2 = 8, R_1 = 4, L_1 = 4, L_2 = 3, P = 9, R_2 = 5$, and $F = 4$ to meet the two different conditions for (1, 1, 1) to become ESS. It can be seen that the results of the numerical simulations are in good agreement with our previously derived theory.}
\label{ESS verify}
\end{figure*}
\end{center}

Without loss of generality, we initially make all strategies evenly mixed among the players' choices, i.e., the initial fractions of $x$, $y$, and $z$ equal to 0.5. Here, we have symbolically assigned the parameters in order to follow the ESS conditions discussed in Sec. \ref{Model discussion}.

\subsubsection{The Evolutionary Trajectories of Players at ESS(0, 0, 1)}

When $2F<C_1$, $2F+\alpha V_2<C_2$, and $C_3<P$, it follows from Thm. \ref{ESS conditions}(ii) that (0, 0, 1) is ESS, i.e., when the above three inequalities are true, data providers adopt the strategy of opening low quality data, data users choose not to acquire data, and data regulators select to regulate open data. By setting the $x$-axis as time and the $y$-axis as the ratio of cooperation, we present the evolutionary trajectories of players in Fig. 2(a). Besides, other parameters are set as $C_1 = 8, C_2 = 10, C_3 = 6, \alpha = 0.7, V_1 = 5, V_2 = 3, R_1 = 2, L_1 = 2, L_2 = 2, P = 8, R_2 = 5$, and $F = 3$ to meet the above conditions. From Fig. 2(a), we can demonstrate a gradual decrease to 0 in the proportion of data providers selecting to open high quality data and data users performing to acquire data, and a gradual increase to 1 in the proportion of data regulators selecting to regulate open data. In addition, all three curves reach a steady state after $t = 2$.

\subsubsection{The Evolutionary Trajectories of Players at ESS(0, 1, 1)}

When $2F+R_1+L_1<C_1$, $C_2<\alpha V_2+2F$, and $C_3<P$, it follows from Thm. \ref{ESS conditions}(iv) that (0, 1, 1) is ESS, i.e., when the above three inequalities follow, data providers choose not to open high quality data, data users adopt the strategy of acquiring data, and data regulators select to regulate open data. The parameters are set as $C_1 = 15, C_2 = 10, C_3 = 6, \alpha = 0.8, V_1 = 10, V_2 = 8, R_1 = 4, L_1 = 2, L_2 = 5, P = 10, R_2 = 8$, and $F = 3$ to meet the above conditions, and the evolutionary trajectories of players are shown in Fig. 2(b). Hence, the ratio of data providers choosing to open high quality data consistently decreases to 0, whereas the ratio of data users selecting to acquire data and data regulators performing to regulate open data constantly increases to 1 as time progresses. Additionally, the strategies of data providers reach stability in a shorter period of time than that of the data users and regulators.

\subsubsection{The Evolutionary Trajectories of Players at ESS(1, 0, 1)}

For $C_1<2F$, $2F+R_1+L_2+\alpha V_1<C_2$, and $C_3<P$, it follows from Thm. \ref{ESS conditions}(vi) that (1, 0, 1) is ESS, i.e., when the above three inequalities are true, data providers perform to open high quality data, data users choose not to acquire data, and data regulators select to regulate open data. The evolutionary trajectories of players are demonstrated in Fig. 2(c), where the parameters are set as $C_1 = 4, C_2 = 13, C_3 = 6, \alpha = 0.2, V_1 = 8, V_2 = 5, R_1 = 3, L_1 = 5, L_2 = 1, P = 7, R_2 = 6$, and $F = 3$ to follow the above conditions. It can be observed that the ratio of data providers selecting to open high quality data and data regulators performing to regulate open data keeps rising and finally reaches 1, while the ratio of data users choosing to acquire data keeps decreasing and finally reaches 0. The data user here reaches a steady state faster than the data provider and regulator at this situation.

\subsubsection{The Evolutionary Trajectories of Players at ESS(1, 1, 1)}

\begin{figure*}[htbp]
\centering
\subfigure[$C_2$ vs $\alpha$]{
\includegraphics[scale=0.25]{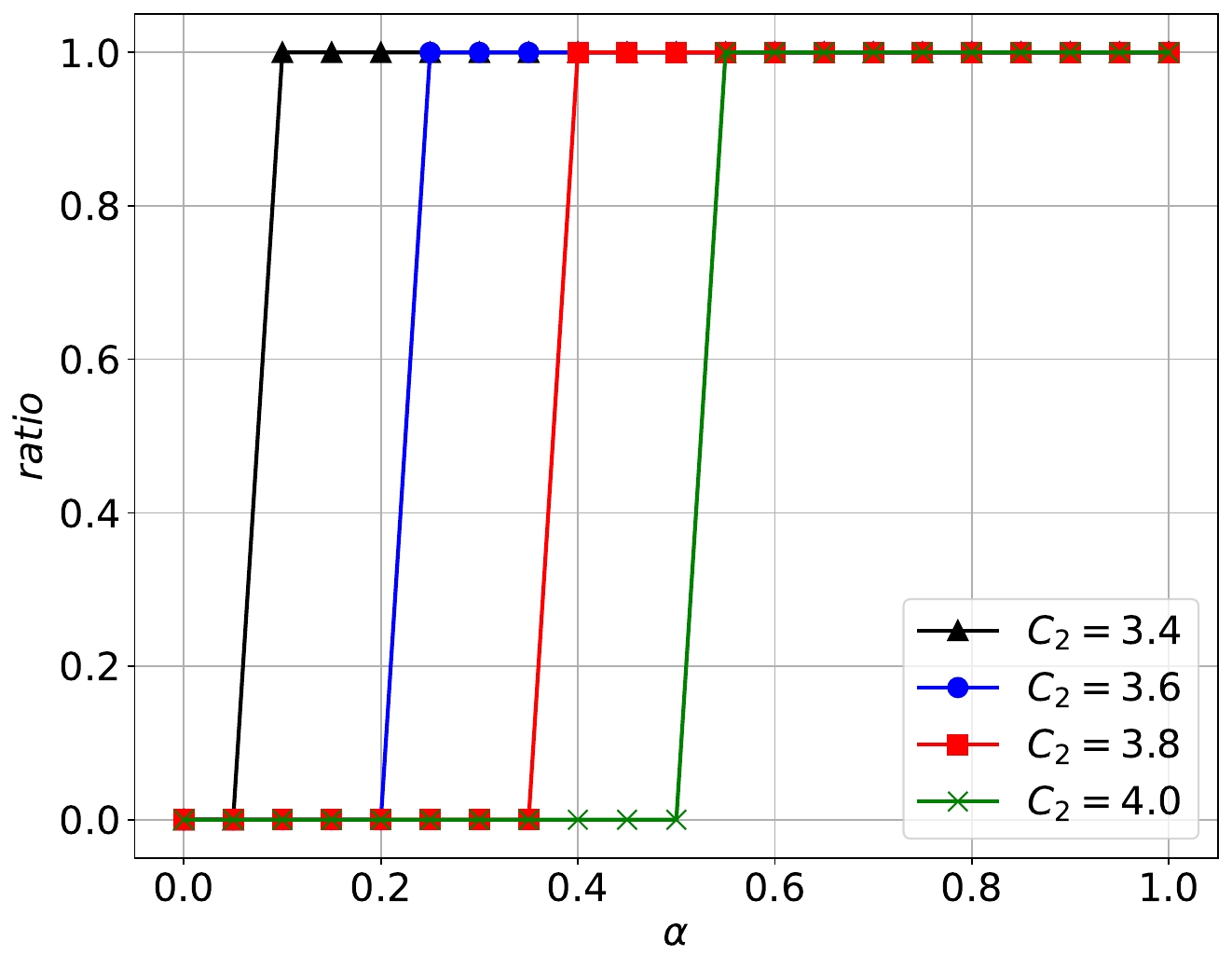}
\label{c2alpha}}
\subfigure[$V_1$ vs $\alpha$]{
\includegraphics[scale=0.25]{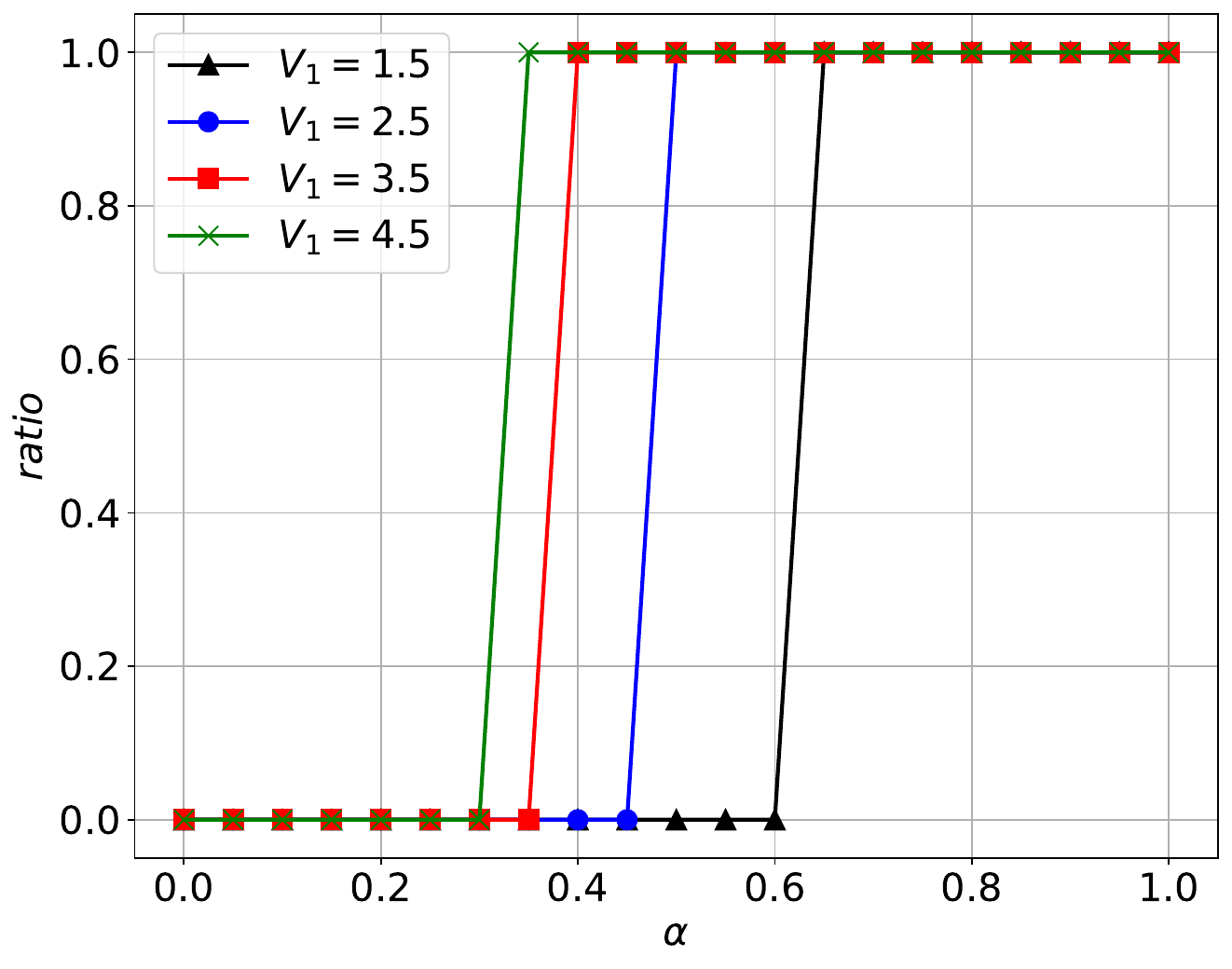}
\label{v1alpha}}
\subfigure[$F$ vs $\alpha$]{
\includegraphics[scale=0.25]{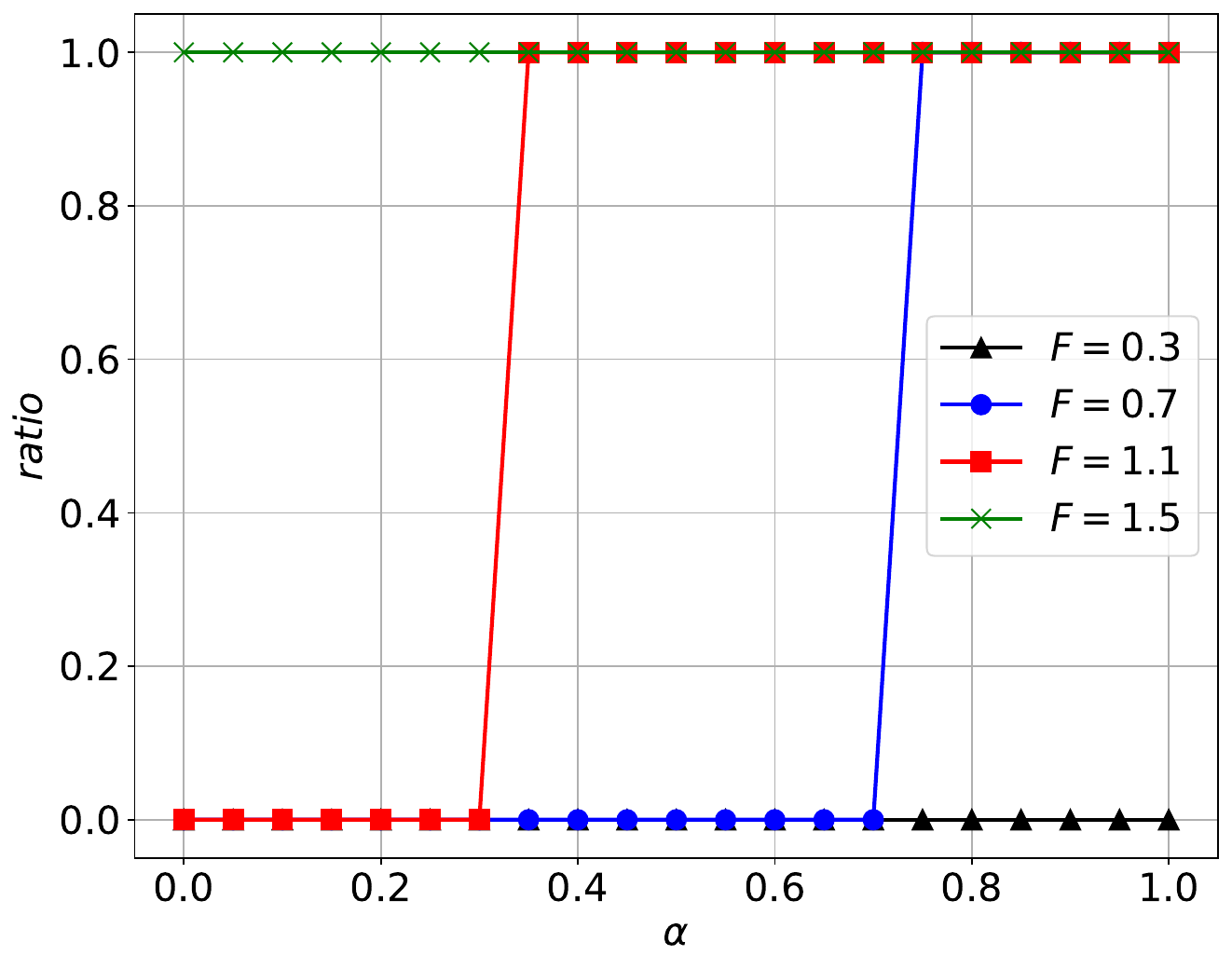}
\label{falpha}}
\caption{\textbf{Effects of $\alpha$, $C_2$, $V_1$, and $F$ on the fraction of the data users choosing to acquire data.} Here, we fix $C_1=6$, $C_3=5$, $V_2=1$, $R_1=2$, $L_1=5$, $L_2=7$, $P=3$, $R_2=4$. Regarding the concerned $C_2$, $V_1$, and $F$, we have (a) $C_2= [3.4, 3.6, 3.8, 4.0]$, $V_1=2$, $F=1$. (b) $C_2=4$, $V_1=[1.5, 2.5, 3.5, 4.5]$, and $F=1$. (c) $C_2=4$, $V_1=3$, and $F=[0.3, 0.7, 1.1, 1.5]$. The $x$ and $y$ axes are set as the data mining capability $\alpha$ and the fraction of the data users adopting to acquire data, respectively. It can be observed that a larger value of data mining capability $\alpha$ results in a greater cooperation level of the data user.}
\label{X_alpha}
\end{figure*}

\begin{figure*}[htbp]
\centering
\subfigure[$\alpha = 0.30$]{
\includegraphics[scale=0.24]{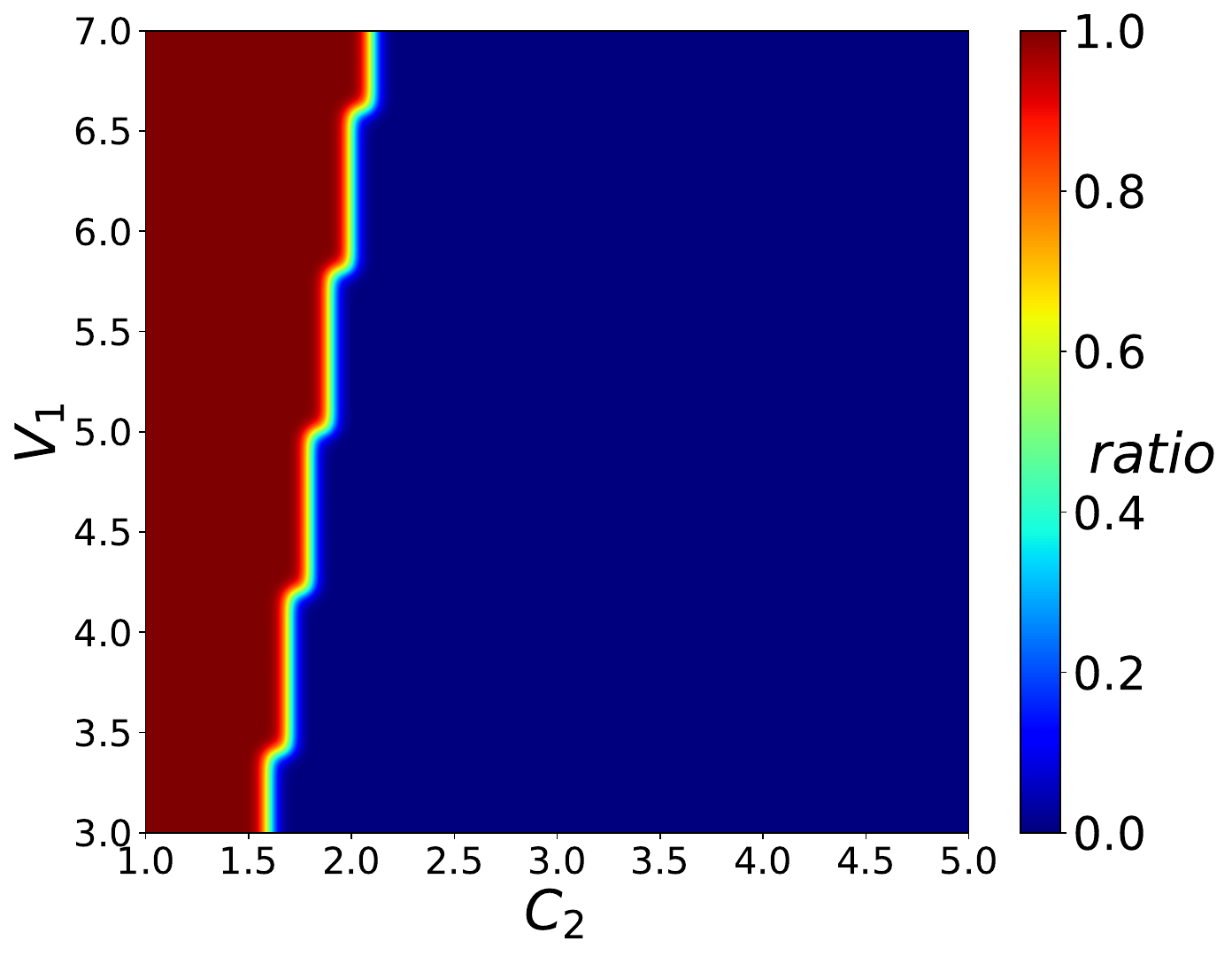}
\label{heatmaps1}}
\subfigure[$\alpha = 0.60$]{
\includegraphics[scale=0.24]{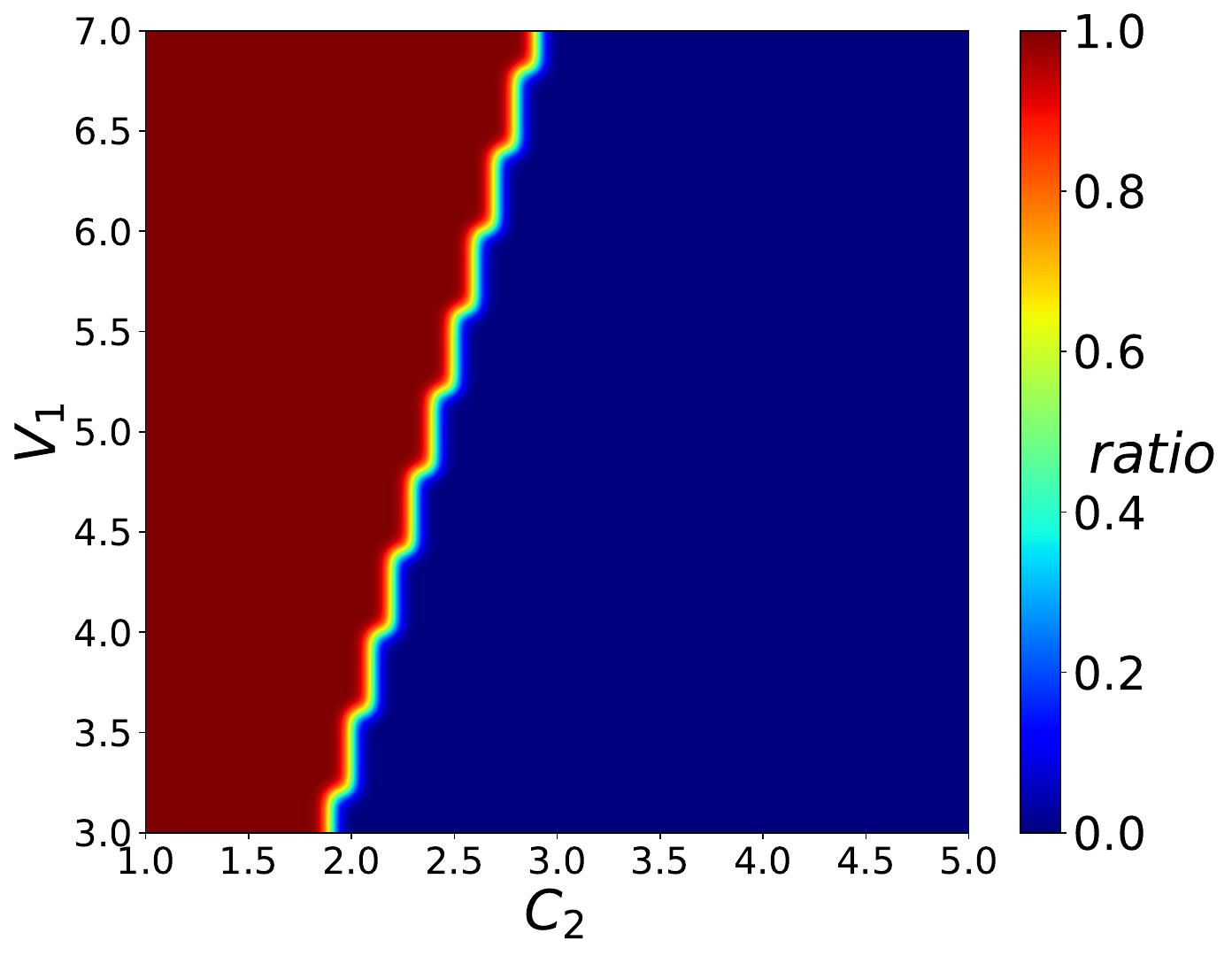}
\label{heatmaps2}}
\subfigure[$\alpha = 0.90$]{
\includegraphics[scale=0.24]{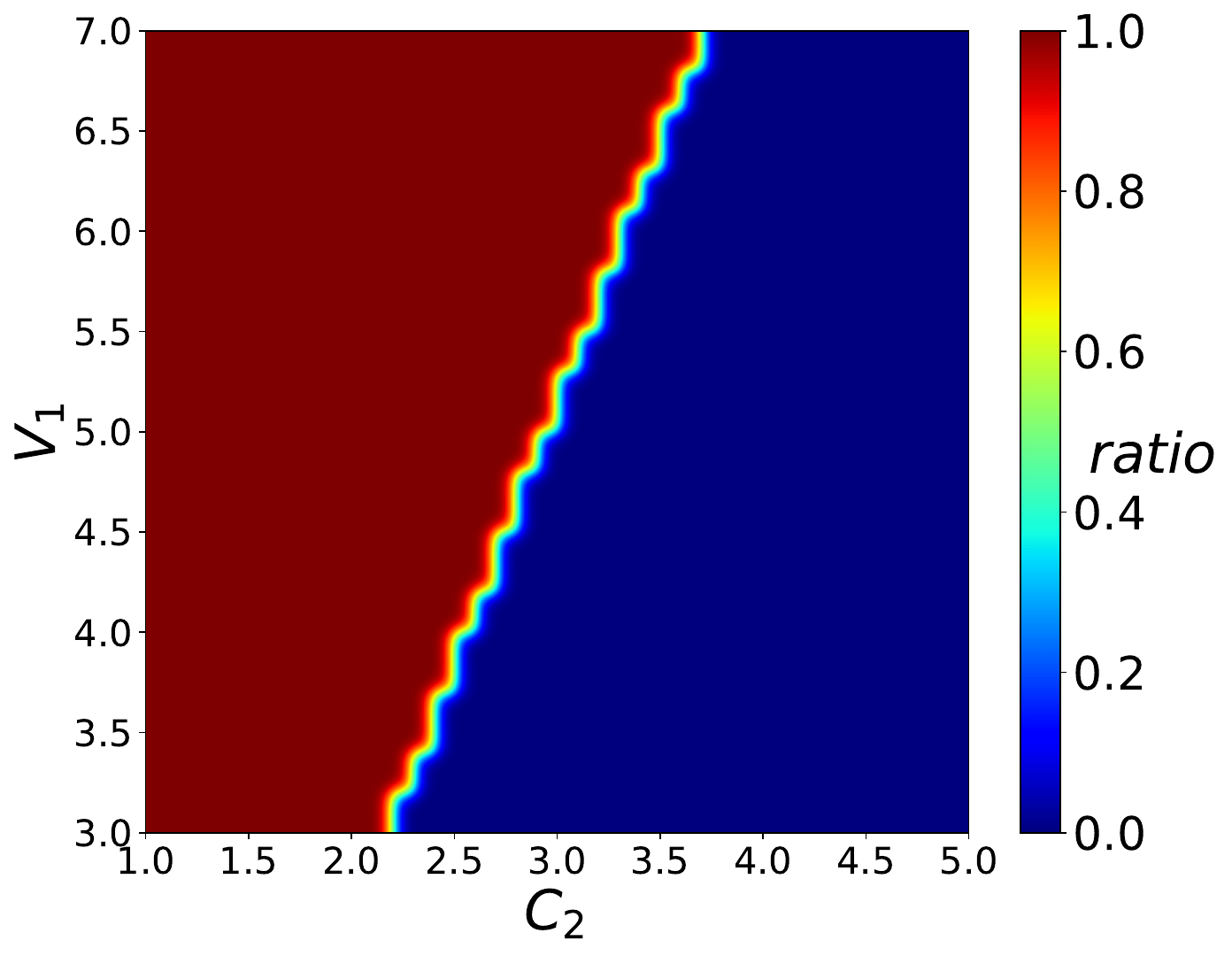}
\label{heatmaps3}}
\caption{\textbf{Heat maps of the fraction of the data user performing to acquire data considering $V_1$ and $C_2$.} Here, we set $V_1$ and $C_2$ in $[1,5]$ and $[3,7]$ respectively, and fix $C_1=6$, $C_3=5$, $V_2=1$, $R_1=2$, $L_1=5$, $L_2=7$, $P=3$, $R_2=4$, and $F=1$. Besides, we have (a) $\alpha=0.30$, (b) $\alpha=0.60$, and (c) $\alpha=0.90$. The color bar is in the range $[0, 1]$ as shown in each subgraph. For the larger data mining capability $\alpha$, the ratio of the data user choosing to acquire data can be enhanced by decreasing the cost of the data users acquiring data $C_2$ or increasing the value of high quality data $V_1$.}
\label{heatmaps}
\end{figure*}

When $C_1<2F+R_1+L_1$, $C_2<2F+R_1+L_2+\alpha V_1$, and $C_3<R_2+P$, it follows from Thm. \ref{ESS conditions}(viii) that (1, 1, 1) is ESS since the condition $C_3<R_2+P$ is always satisfied, i.e., when the above two inequalities follow, data providers perform to open high quality data, data users select to acquire data, and data regulators choose to regulate open data. The parameters are set as $C_1 = 8, C_2 = 6, C_3 = 7, \alpha = 0.7, V_1 = 10, V_2 = 8, R_1 = 4, L_1 = 4, L_2 = 3, P = 9, R_2 = 5$, and $F = 4$ to meet the above conditions, and the evolutionary trajectories of players are shown in Fig. 2(d). We can see that the ratio of data providers performing to open high quality data, data users selecting to acquire data, and data regulators choosing to regulate open data consistently increase to 1. Moreover, Fig. 2(d) illustrates that the evolution of the data user reaches the steady state faster than the data provider and regulator, while the data provider and regulator reach the steady state at almost the same time and the steady state time of all three curves is less than 1.

From the evolutionary trajectories of players at different evolutionary stable strategies, we find that the results of numerical simulations are consistent with our previous theory.

\subsection{Effects of the Data Mining Capability on Data Users' Strategies}
\label{Effects of the data mining capability on data users' strategy}

\begin{figure*}[htbp]
\centering
\subfigure[$\alpha = 0.30$, $C_2$]{
\includegraphics[scale=0.52]{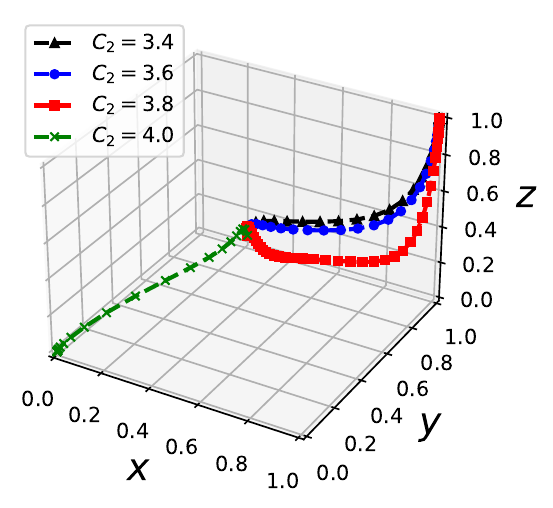}
\label{tracec2alpha1}}
\subfigure[$\alpha = 0.60$, $C_2$]{
\includegraphics[scale=0.52]{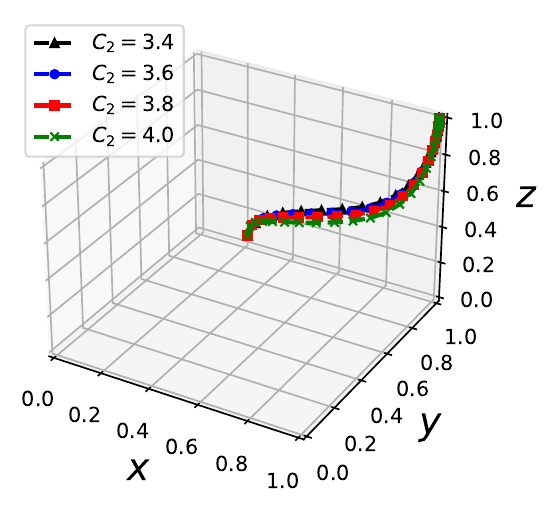}
\label{tracec2alpha2}}
\subfigure[$\alpha = 0.90$, $C_2$]{
\includegraphics[scale=0.52]{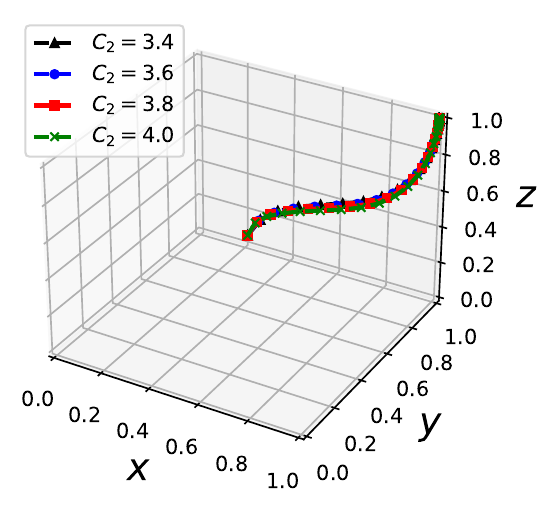}
\label{tracec2alpha3}}

\subfigure[$\alpha = 0.30$, $V_1$]{
\includegraphics[scale=0.52]{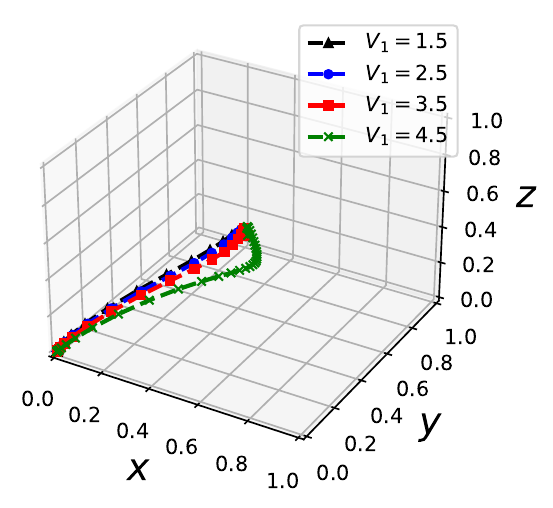}
\label{tracev1alpha1}}
\subfigure[$\alpha = 0.60$, $V_1$]{
\includegraphics[scale=0.52]{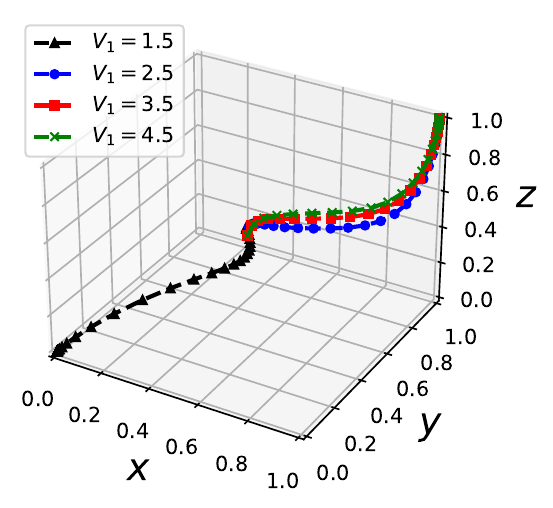}
\label{tracev1alpha2}}
\subfigure[$\alpha = 0.90$, $V_1$]{
\includegraphics[scale=0.52]{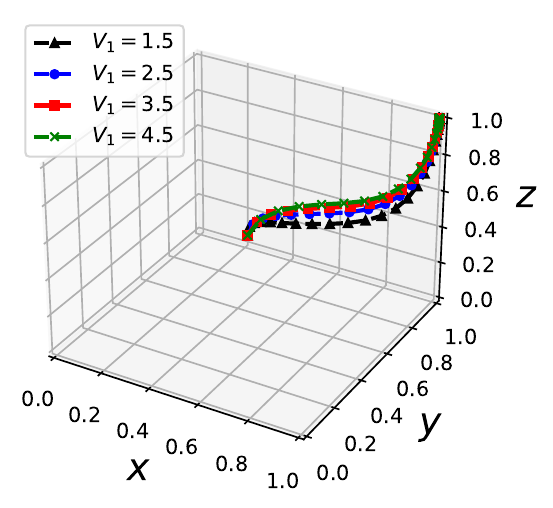}
\label{tracev1alpha3}}

\subfigure[$\alpha = 0.30$, $F$]{
\includegraphics[scale=0.52]{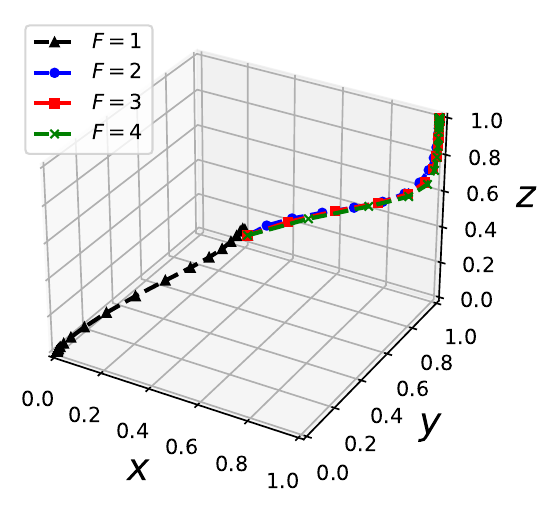}
\label{tracefalpha1}}
\subfigure[$\alpha = 0.60$, $F$]{
\includegraphics[scale=0.52]{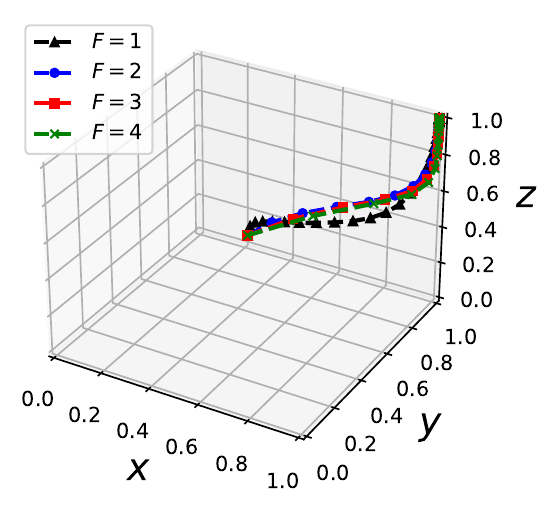}
\label{tracefalpha2}}
\subfigure[$\alpha = 0.90$, $F$]{
\includegraphics[scale=0.52]{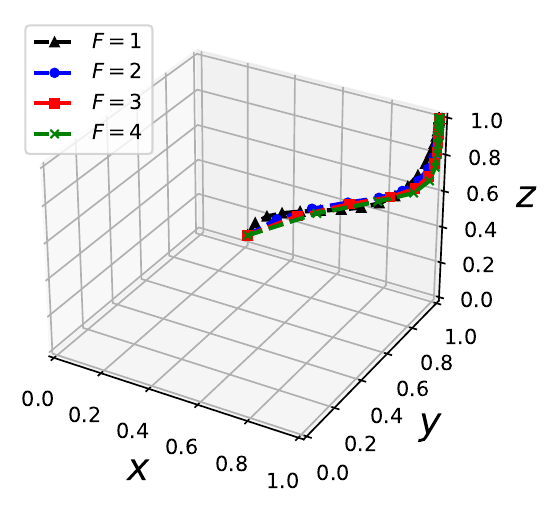}
\label{tracefalpha3}}
\caption{\textbf{3D traces of $x$, $y$, and $z$ in the evolution processes.} Here, the fixed parameters are $C_1=6$, $C_3=5$, $V_2=1$, $R_1=2$, $L_1=5$, $L_2=7$, $P=3$, and $R_2=4$. $\alpha\in[0.30, 0.60, 0.90]$ is set for the cross simulation. In (a), (b), and (c), we set $C_2=[3.4, 3.6, 3.8, 4.0]$, $V_1=3$, and $F=1$. In (d), (e), and (f), we set $V_1=[1.5, 2.5, 3.5, 4.5]$, $C_2=4$, and $F=1$. In (g), (h), and (i), we set $F=[1, 2, 3, 4]$, $C_2=4$, and $V_1=3$. Each curve stands for a changing trajectory of $x$, $y$, and $z$, of which the initial values are all $0.5$. The 3D traces exhibit that an increase in the data mining capability $\alpha$ will provide more relaxed conditions for the whole group to adopt cooperative strategies.}
\label{traces}
\end{figure*}

In this subsection, we explore how the data mining capability affects data users' strategies along with three other factors, including $C_2$, $F$, and $V_1$. The reason for choosing these three factors is their significant impact on the group strategy with the users' data mining capability $\alpha$, which can indicate meaningful and noticeable results.

As shown in Fig. \ref{X_alpha}, we represent the fraction of data users choosing to acquire data against the data mining capability. Each data point is obtained from a stable state of the evolution system. From Fig. \ref{c2alpha}, it is obvious that the increase of $C_2$ brings a higher capability threshold for data users to choose to acquire the data. The same pure strategy state can be reached if different $\alpha$s are given, but the cost of the data users to acquire data needs to be minimized to provide the population with a large space for positive interaction. From Fig. \ref{v1alpha}, we find that the increase of $V_1$ is beneficial for data users to acquire data. However, this positive effect is less significant if the original $V_1$ is already large enough. In Fig. \ref{falpha}, the fraction of data users who choose to acquire data is sensitive to the parameter $F$. We find that minor changes of $F$ induce a sudden change in the data users' strategies, and this change influences the group strategy configuration conspicuously. 

Subsequently, to further explore the influences of $V_1$ and $C_2$ on the data users' strategies, we present the heat maps (also called phase diagrams) shown in Fig. \ref{heatmaps}. We set $V_1$ in the range $[3, 7]$ and $C_2$ in the range $[1, 5]$ to investigate the fraction of data users to acquire data and the phase transition for different $\alpha s\in[0.30, 0.60, 0.90]$. According to the results in Figs. \ref{heatmaps1}, \ref{heatmaps2}, and \ref{heatmaps3}, the fraction of data users performing to acquire data undergoes a discontinuous phase transition from 1 to 0 with the decrease of $V_1$ and the increase of $C_2$. For the data mining capability $\alpha=0.60$ and $0.90$, a higher $C_2$ and smaller $V_1$ than Fig. \ref{heatmaps1} ($\alpha=0.30$) are allowed to bring more data users to select the strategy of acquiring data. Combined with our previous theoretical analysis, we can conclude that the increase in the data mining capability $\alpha$ provides more lenient conditions for the data users to acquire data effectively and then drive the interaction in the group. 

\subsection{The Impacts of Parameters on Evolutionary Results}

By showing the changing traces of $x$, $y$, and $z$ in the three-dimensional space, we explore the strategy form of the system in the evolution process. Without loss of generality, the initial fractions of the three strategies are all 0.5.

For the same reason as above, we here choose $\alpha$, $C_2$, $V_1$, and $F$ as the arguments and show the results in Fig. \ref{traces}. For a better presentation, we assume that the three-party system reaches the cooperation state if all data providers choose to open high quality data, all data users perform to acquire data, and all data regulators select to regulate the open data. As the increase of the data users' cost $C_2$ to acquire data in Fig. \ref{tracec2alpha1}, the fractions of data providers choosing to open high quality data ($x$), the data users performing to acquire data ($y$), and the data regulators selecting to regulate the open data ($z$) all decrease to $0$. However, along with the increase of the data mining capability $\alpha$ in Figs. \ref{tracec2alpha2} and \ref{tracec2alpha3}, $x$, $y$, and $z$ all increase to $1$ even for a large $C_2$, which enhances the cooperation among the three-party. In Fig. \ref{tracev1alpha1}, the value of high quality data $V_1$ we set is not enough to motivate the data users to acquire data and encourage the data providers and regulators to respond positively. With the increase of the data mining capability $\alpha$, although the cooperation state is not reached if $V_1=1.5$ and $\alpha=0.60$, it is reached given any other situation in Figs. \ref{tracev1alpha2} and \ref{tracev1alpha3}. In Figs. \ref{tracefalpha1}, \ref{tracefalpha2}, and \ref{tracefalpha3}, we explore the influence of $\alpha$ and $F$. As shown in Fig. \ref{tracefalpha1}, the cooperation state may not be reached if $F$ is low. Additionally, with the increase of $\alpha$ and $F$, the cooperation state of the three-party game can be reached in a large parameter space.

\section{Conclusion and outlook} \label{Conclusion}

In today's competitive environment, a good open data relationship, where the data providers adopt the strategy of opening high quality data, data users select to acquire data with a positive attitude, and data regulators perform to regulate the open data, is very crucial in the development of the digital economy. In this study, we conduct an evolutionary game analysis on the provision, acquisition, and regulation of open data from the perspective of game theory for the groups of data providers, users, and regulators, and reach relevant conclusions and insights, which have certain guiding significance for optimizing the quality of open data, promoting wider public participation and the development of the digital economy. Different from previous studies, we introduce the data regulator and data mining capability of data user, which are vital and reasonable for realistic scenarios, and propose a novel evolutionary game model to further evaluate the relationship between multi-stakeholders (data providers, users, and regulators). More specifically, we first analyze the payoff matrix among the three-party game under different strategies based on reasonable assumptions, then develop a replicator dynamic system, and solve its ESS problem. We emphasize that only the 8 pure strategy combinations in our model are ESS, while the internal equilibrium point is not ESS, which is proven from the point of mathematical standpoint. Additionally, we conduct simulation experiments to further validate the theoretical results of the model and reveal that they are in good accordance with each other. 

Based on our theoretical analysis, we determine that cooperative strategies are adopted by all three groups under the following conditions: 

(1) The cost of the data providers to supply high quality data is lower than the combined sum of two times the rewards or penalties for both data providers and users under regulation, the synergistic benefits associated with data users accessing high quality data, and the losses incurred by data providers when low quality data is obtained by users ($C_1<2F+R_1+L_1$); 

(2) The cost of data users to acquire data is less than the combined sum of twice the rewards or penalties for both data providers and users under regulation, the synergistic benefits associated with data users accessing high quality data, the losses faced by data users when high quality data is accessed by competitors, and the data mining capability of data users multiplied by the value of the high quality data ($C_2<2F+R_1+L_2+\alpha V_1$); 

(3) The cost for data regulators to regulate open data is lower than the reward for data regulators executing regulation given that data providers provide high quality data and data users obtain data, in addition to the income generated for data regulators opting to regulate open data ($C_3<R_2+P$). 

Furthermore, based on the outcomes of our conducted numerical simulations, we can provide the following recommendations to steer the development of open data in a positive direction: 

(1) Data providers, can take some ways to improve the value of open data, such as starting with refining the original data to ensure accuracy and reliability, establishing a standardized system of norms and guidelines to maintain consistency and compatibility, and minimizing data redundancy and eliminating noise to enhance the quality of the provided data; 

(2) Data users, can work on strengthening their data mining capabilities and reducing data acquisition costs by integrating rapidly advancing technologies such as cloud computing and deep learning with their existing methods. This fusion of technologies can lead to more efficient data utilization; 

(3) For data regulators, it is recommended that data regulators consider augmenting the rewards or penalties imposed on both data providers and users under regulatory frameworks, which can effectively incentivize data providers and users to embrace a cooperative strategy. This multifaceted approach allows individuals from all three groups to collectively improve their practices based on their own perspectives, thus contributing to the prosperous advancement of open data initiatives. This research facilitates the alignment of data providers, users, and regulators, thereby fostering a clear and coherent vision for sustained collaboration and long-term development within the realm of open data.

This study has some directions to continue. First, we can allow for some possibility of mutation in the behavior of the three-party and study the strategic evolution of the three in this case. Next, we can also consider the effect of punishment mechanisms on strategic evolution and introduce an additional category of players, which could be extended to a four-party evolutionary game. Furthermore, the Environments - Classes, Agents, Roles, Groups, and Objects (E-CARGO) model and Role-Based Collaboration (RBC) methodology \cite{zhu2021cargo, zhu2022group, zhu2020computational} have been proposed as a tool for investigating social problems. They are good potential tools for investigating multi-player games. In the future, we may apply E-CARGO and Group Role Assignment with Constraints (GRA+) \cite{zhu2022group} to analyze the problem discussed in this paper. Eventually, we hope that our work will contribute to the relative study of the evolutionary game on open data in the near future.


%

\ifCLASSOPTIONcaptionsoff
  \newpage
\fi

\bibliographystyle{ieeetr}



%

%

\begin{IEEEbiography}[{\includegraphics[width=1in,height=1.25in,clip,keepaspectratio]{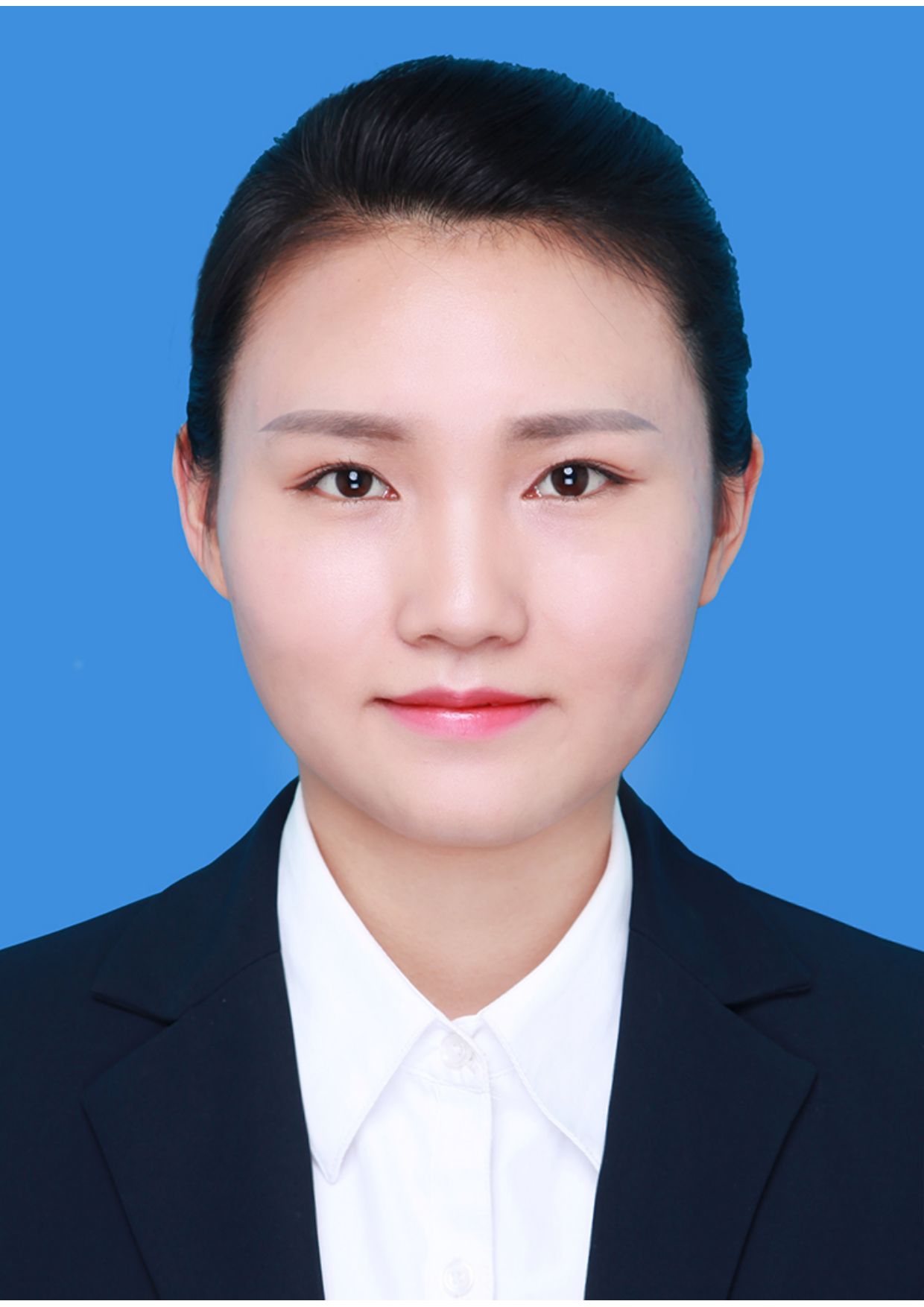}}]{Qin Li} received the M.S. degree in Public Administration from the University of Electronic Science and Technology of China in 2018. She is currently pursuing the Ph.D. degree with School of Public Policy and Administration, Chongqing University, China. Her research interests include social computing, complex networks, and evolutionary games.
\end{IEEEbiography}

\begin{IEEEbiography}[{\includegraphics[width=1in,height=1.25in,clip,keepaspectratio]{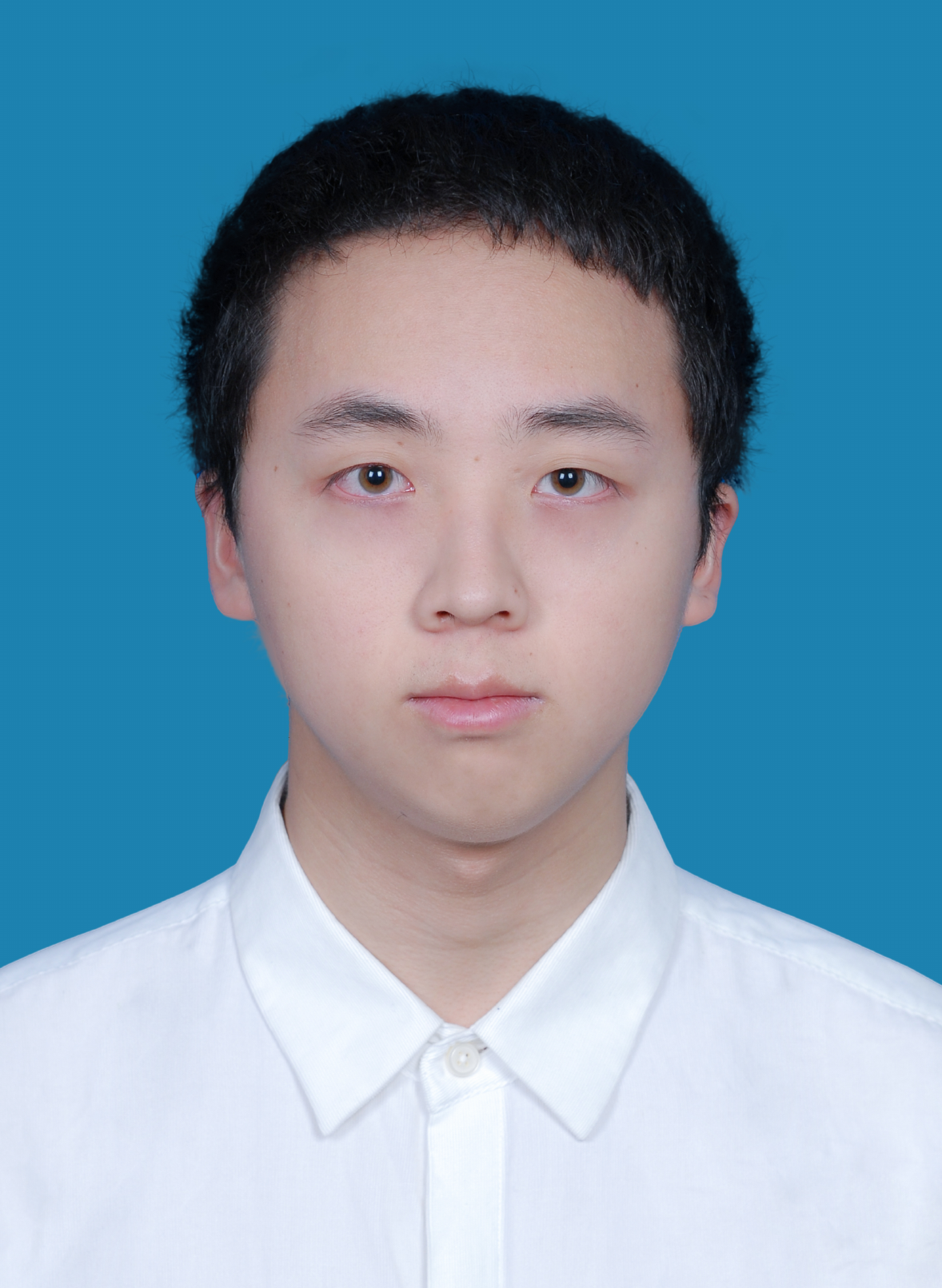}}]{Bin Pi} received the B.E. degree in data science and big data technology from the College of Artificial Intelligence, Southwest University, Chongqing, China, in 2023. He is currently pursuing the M.S. degree in mathematics with the School of Mathematical Sciences, University of Electronic Science and Technology of China, Chengdu, China. His research interests include complex networks, evolutionary games, stochastic processes, and nonlinear science.
\end{IEEEbiography}

\begin{IEEEbiography}[{\includegraphics[width=1in,height=1.25in,clip,keepaspectratio]{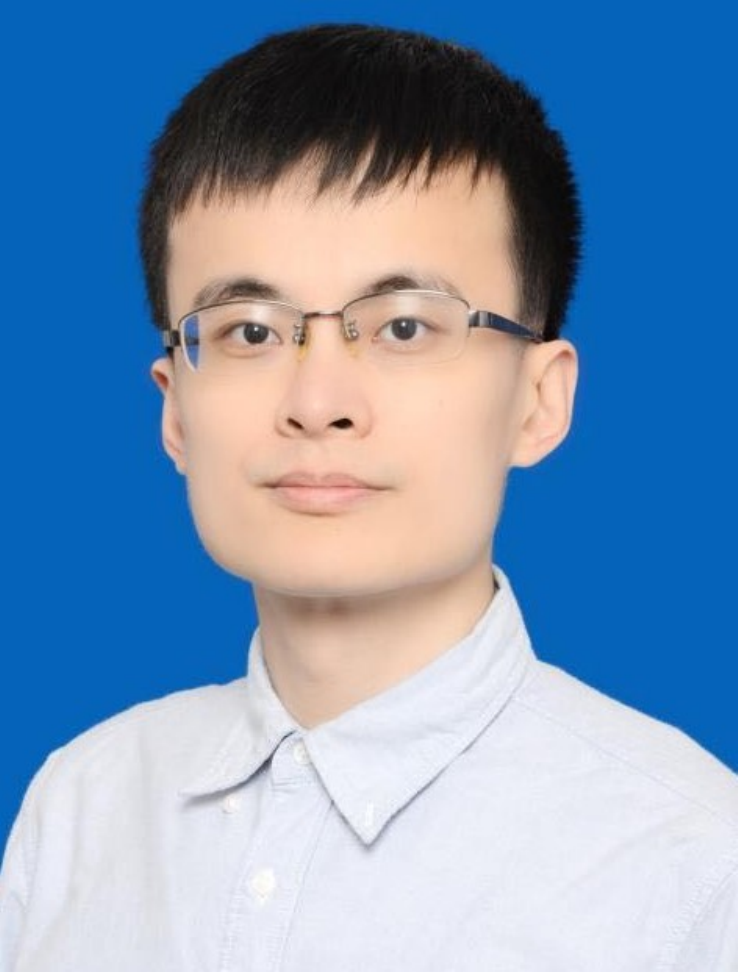}}]{Minyu Feng} (Member IEEE) received the B.S. degree in mathematics from the University of Electronic Science and Technology of China in 2010; the Ph.D. degree in computer science from the University of Electronic Science and Technology of China in 2018. From 2016 to 2017, he was a joint-training Ph.D. student with the Potsdam Institute for Climate Impact Research, Germany, and Humboldt University, Berlin, Germany. Since 2019, he has been an associate professor in the College of Artificial Intelligence, Southwest University, Chongqing, China. He is a Senior Member of China Computer Federation (CCF), an academic member of IEEE and Chinese Association of Automation (CAA). He currently serves as an editorial board member of PLOS ONE, PLOS Complex Systems, and International Journal of Mathematics for Industry, also a guest editor in Entropy and Frontiers in Physics. His research interests include complex systems, stochastic processes, evolutionary games and nonlinear dynamics.
\end{IEEEbiography}

\begin{IEEEbiography}[{\includegraphics[width=1in,height=1.25in,clip,keepaspectratio]{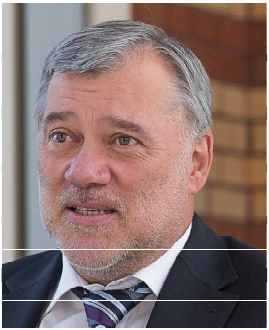}}]{J\"{u}rgen Kurths} received the B.S. degree in mathematics from the University of Rostock, Rostock, Germany, the Ph.D. degree from the Academy of Sciences, German Democratic Republic, Berlin, Germany, in 1983, the Honorary degree from N.I.Lobachevsky State University, Nizhny Novgorod, Russia in 2008, and the Honorary degree from Saratow State University, Saratov, Russia, in 2012.

From 1994 to 2008, he was a Full Professor with the University of Potsdam, Potsdam, Germany. Since 2008, he has been a Professor of nonlinear dynamics with the Humboldt University of Berlin, Berlin, Germany, and the Chair of the Research Domain Complexity Science with the Potsdam Institute for Climate Impact Research, Potsdam, Germany. He has authored more than 700 papers, which are cited more than 60000 times (H-index: 111). His main research interests include synchronization, complex networks, time series analysis, and their applications.

Dr. Kurths was the recipient of the Alexander von Humboldt Research Award from India, in 2005, and from Poland in 2021, the Richardson Medal of the European Geophysical Union in 2013, and the Eight Honorary Doctorates. He is a Highly Cited Researcher in Engineering. He is a member of the Academia 1024 Europaea. He is an Editor-in-Chief of CHAOS and on the Editorial Boards of more than ten journals. He is a Fellow of the American Physical Society, the Royal Society of 1023 Edinburgh, and the Network Science Society.
\end{IEEEbiography}


\vfill


\end{document}